%% 
%% Copyright 2007, 2008, 2009 Elsevier Ltd
%% 
%% This file is part of the 'Elsarticle Bundle'.
%% ---------------------------------------------
%% 
%% It may be distributed under the conditions of the LaTeX Project Public
%% License, either version 1.2 of this license or (at your option) any
%% later version.  The latest version of this license is in
%%    http://www.latex-project.org/lppl.txt
%% and version 1.2 or later is part of all distributions of LaTeX
%% version 1999/12/01 or later.
%% 
%% The list of all files belonging to the 'Elsarticle Bundle' is
%% given in the file `manifest.txt'.
%% 
%% Template article for Elsevier's document class `elsarticle'
%% with harvard style bibliographic references
%% SP 2008/03/01

% \documentclass[preprint,12pt,authoryear]{elsarticle}

%% Use the option review to obtain double line spacing
%% \documentclass[authoryear,preprint,review,12pt]{elsarticle}

%% Use the options 1p,twocolumn; 3p; 3p,twocolumn; 5p; or 5p,twocolumn
%% for a journal layout:
%% \documentclass[final,1p,times,authoryear]{elsarticle}
%% \documentclass[final,1p,times,twocolumn,authoryear]{elsarticle}
%% \documentclass[final,3p,times,authoryear]{elsarticle}

\documentclass[final,5p,times,twocolumn,authoryear]{elsarticle}

\usepackage[T1]{fontenc}
% \usepackage[latin1]{inputenc}
%% \documentclass[final,5p,times,authoryear]{elsarticle}
%% \documentclass[final,5p,times,twocolumn,authoryear]{elsarticle}

%% For including figures, graphicx.sty has been loaded in
%% elsarticle.cls. If you prefer to use the old commands
%% please give \usepackage{epsfig}

%% The amssymb package provides various useful mathematical symbols
\usepackage{amssymb}
\usepackage{color}
\usepackage{geometry}
\usepackage{ulem}
\usepackage{soul}
\usepackage{caption}
\usepackage{amsmath,amsthm}
\newtheorem{lem}{Lemma}

%% The amsthm package provides extended theorem environments
%% \usepackage{amsthm}
\definecolor{olive}{RGB}{33,126,38}
\newcommand{\R}{\mathbb{R}}
\newcommand{\N}{\mathbb{N}}

%% The lineno packages adds line numbers. Start line numbering with
%% \begin{linenumbers}, end it with \end{linenumbers}. Or switch it on
%% for the whole article with \linenumbers.
%% \usepackage{lineno}

\begin{document}

\begin{frontmatter}

%% Title, authors and addresses

%% use the tnoteref command within \title for footnotes;
%% use the tnotetext command for theassociated footnote;
%% use the fnref command within \author or \address for footnotes;
%% use the fntext command for theassociated footnote;
%% use the corref command within \author for corresponding author footnotes;
%% use the cortext command for theassociated footnote;
%% use the ead command for the email address,
%% and the form \ead[url] for the home page:
%% \title{Title\tnoteref{label1}}
%% \tnotetext[label1]{}
\author{Charline Smadi \fnref{label1}}
\author{H\'el\`ene Leman \fnref{label2}}
\author{Violaine Llaurens \corref{co} \fnref{label3}}

\ead{violaine.llaurens@mnhn.fr}

\fntext[label1]{IRSTEA UR LISC, Laboratoire d'ing\'enierie des Syst\`emes Complexes, 9 avenue Blaise-Pascal CS
20085, 63178 Aubi\`ere, France and Complex Systems Institute of Paris Île-de-France, 113 rue Nationale, Paris, France}
\fntext[label2]{CIMAT, De Jalisco S-N, Valenciana, 36240 Guanajuato, Gto., Mexico}
\fntext[label3]{Institut de Syst\'ematique, Evolution et Biodiversit\'e UMR 7205 CNRS/MNHN/UPMC/EPHE, Museum National d'Histoire Naturelle, 
Paris, France}

%% \author{Name\corref{cor1}\fnref{label2}}
%% \ead{email address}
%% \ead[url]{home page}
%% \fntext[label2]{}
 \cortext[co]{Corresponding author}
%% \address{Address\fnref{label3}}
%% \fntext[label3]{}

\title{Looking for the right mate in diploid species: how dominance relationships affect population differentiation in sexual trait?}

%% use optional labels to link authors explicitly to addresses:
%% \author[label1,label2]{}
%% \address[label1]{}
%% \address[label2]{}

\author{}

\address{}

\begin{abstract}
Divergence between populations for a given trait can be driven by sexual selection, interacting with migration behaviour. Mating preference for different phenotypes 
may indeed lead to specific migration behaviour, with departures from populations where the preferred trait is rare. 
Such preference can then trigger the emergence and persistence of differentiated populations, even without any local adaptation. However the genetic architecture 
underlying the trait targeted by mating preference may have 
a profound impact on population divergence. In particular, dominance between alleles encoding for divergent phenotypes can interfere in such differentiation process. 
Using a diploid model of trait determining both mating 
success and migration rate, we explored differentiation between two connected populations, assuming either co-dominance or strict dominance between alleles. The model 
assumes that individuals prefer mating with partners displaying the same phenotype and therefore tend to move to the other population when their own phenotype 
is rare. We show that the emergence of differentiated populations in this diploid moded is limited as compared to results obtained with the same model assuming 
haploidy. When assuming co-dominance, differentiation 
arises only when migration is limited as compared to preference. Such differentiation is less dependent on migration when assuming strict dominance between 
haplotypes. 
Dominant alleles frequently invade populations because their phenotype is more frequently expressed, resulting in higher local mating success and rapid decrease in 
migration. However, depending on the initial distribution of alleles, this advantage associated with dominance (i.e. Haldane's sieve) may lead  to fixation of the 
dominant allele throughout both populations. Depending on the initial distribution of heterozygotes in the two populations, persistence of polymorphisms within 
populations can also occur because heterozygotes displaying the predominant phenotype benefit from high mating success. Altogether, our results highlight 
that heterozygotes' behaviour 
has a strong impact on population differentiation and stress out the need of diploid models of differentiation and speciation driven by sexual selection.

\end{abstract}

\begin{keyword}
%% keywords here, in the form: keyword \sep keyword
%Assortative mating \sep Dominance \sep Population differentiation \sep Migration \sep Uneven population size
Mate preference \sep Heterozygote \sep Spatial segregation \sep Migration \sep Uneven population size

%% PACS codes here, in the form: \PACS code \sep code

%% MSC codes here, in the form: \MSC code \sep code
%% or \MSC[2008] code \sep code (2000 is the default)

\end{keyword}

\end{frontmatter}

\section{Introduction}

Understanding processes leading to biological diversification is a central question in evolutionary biology. Traits may diverge neutrally because of geographic 
barriers limiting gene flow, or simple isolation by distance due to limited dispersal, resulting in genetic and phenotypic differentiation
(\cite{lande1980genetic,slatkin1987gene}). 
Mating preferences may also drive divergence in targeted traits, with assortative mating promoting local fixation of the most abundant phenotype: 
common phenotypes indeed benefit from greater mating success leading to positive frequency dependent selection at local scale. Migration behaviours can then be 
affected by such mating preference, because mate searching can stimulate dispersal when the preferred trait is locally rare. Altogether, mate preference has been 
shown to have a great impact on both local polymorphism (\cite{PayneKrakauer1997}) and spatial 
structure (\cite{m2010assortative}) of targeted traits. Nevertheless the genetic architecture of the trait under sexual selection may also influence the evolution of such population differentiation. Linkage disequilibrium 
between loci controlling the adaptive trait and preference trait is known to favour the divergence in a Fisherian run-away process (\cite{fisher1930genetical}). 
Dominance relationships among differentiated alleles may also greatly influence the spatial distribution of different phenotypes 
(\cite{pannell2005haldane}). The effective migration of 
advantageous alleles is indeed favoured when they are dominant: immigrant alleles entering a new population will mostly occur at heterozygous state so that they will be picked up by 
positive selection only if they are expressed. Recessive adaptive alleles, scarcely expressed, are more likely to be lost by genetic drift because of their 
neutrality at heterozygous state. This Haldane's sieve effect (\cite{haldane1927mathematical}) predicts a greater effective migration of dominant adaptive alleles as compared to 
recessive ones. For example, in the polymorphic locus of sporophytic self-incompatibility where rare alleles benefit from increased reproductive success, migration 
of dominant alleles has been shown to be more effective than migration of recessive ones (\cite{schierup1997evolutionary}). 
Dominance among alleles 
may thus play an important role on the dynamics of spatial differentiation of traits under sexual selection. 
Here we investigate the influence of dominance 
on spatial differentiation of a trait determining both mating and migration behaviours.
Our model is an extension of a previously described model assuming haploid individuals 
(\cite{coron2016stochastic}). 
Two populations, linked by migration, are assumed. A mating trait, encoded by one locus has two consequences: 
(1) encountering pairs mate more often when both individuals display the same phenotype for this trait and (2) 
the migration rate of an individual is proportional to the frequency of the other phenotype in its patch:
individuals are more prone to move if they have difficulties to find a suitable mate in their patch. This hypothesis is relevant for all organisms with active mate searching  
($e.g.$ patrolling behaviours in butterflies (\cite{jugovic2017movement}) or mate sampling in Bowerbirds (\cite{uy2001complex})), but also in 
organisms where gametes are involved in 
dispersal ($e.g.$ sea urchins (\cite{crimaldi2012role}) or plants (\cite{millar2014extensive})), and may travel large distance before encountering a suitable mate. 
This original hypothesis linking mating preference and migration contrasts with assumptions found in classical models of speciation (see (\cite{gavrilets2014review}) 
for a review)  
where the preference traits are generally not directly linked to migration behaviour.
In the haploid model studied in (\cite{coron2016stochastic}), this preference behaviour can lead to spatial 
differentiation of the trait between the two populations with fixation of different phenotypes in the two populations despite migration, without any local adaptation.
More precisely, if at initial state one phenotype is predominant in one patch and the other phenotype predominant in the other patch, 
then there is a spatial differentiation regardless of migration rate value.
Given the importance of dominance on migration of alleles, here we
extend this model to a diploid case with dominance relationships between alleles, and  explore how dominance may modulate the dynamics of spatial differentiation 
on this mating trait.

\section{Materials and methods}

In this section we provide a general description of the model generating
population dynamics. We also explicit mathematical methods used to derive analytical results as well as conditions used for numerical simulations.
More details can be found in the Supplementary Materials.

We consider a population of hermaphroditic diploid individuals characterized by 
(1) a single phenotype controlled by their genotype at one bi-allelic locus ($A$ and $a$), 
and (2) their position on a space divided in two patches ($1$ and $2$). 
The number of individuals $AA$, $Aa$ and $aa$ in the two patches follow a dynamical system, which can be obtained as a limit of a 
stochastic multi-type birth and death 
process with competition in continuous time ((\cite{coron2016stochastic}) and see \ref{def_model}).
In particular, populations sizes in the patches are varying and generations are overlapping.

The phenotype of individuals influences both (1) their mating and (2) migration behaviours. 
(1) Individuals having 
the same phenotype have a higher probability to mate (see (\cite{jiang2013assortative}) for a recent review of the mechanisms of assortative 
matings in animals). 
(2) This mate preference also influences migration from one patch to another: individuals carrying a phenotype at low frequency within patch have a greater migration because we assumed migration to be promoted by the local lack of suitable mates. Therefore, migration rate depends on the frequency of individuals carrying a different phenotype within their patch. 
Examples of animals migrating to find suitable mates are well documented (\cite{Schwagmeyer1988, Honeretal2007}).
A migration mechanism similar to the one presented here has been studied in (\cite{coron2016stochastic}) and in (\cite{PayneKrakauer1997}) in a continuous space model.

Five parameters are needed to describe the two-populations dynamics:
\begin{itemize}
 \item $b$ is the minimal individual birth rate. It corresponds to the rate at which an individual gives birth if there is no individual with the same 
 phenotype in its patch.
 \item $\beta \geq 1$ is the sexual preference. 
 Individuals encounter randomly, and 
 two individuals with 
 similar phenotypes have a higher probability to mate (see Section \ref{section_dominance} for details).
 \item $p$ describes the individuals' ability to migrate. The effective migration rate of an individual is the product of $p$ and of the proportion 
 of 'not preferred' individuals (see Section \ref{section_model} and \ref{def_model} for details). 
 \item $d$ is the individual natural death rate.
 \item $c$ represents the competition for food or space exerted by an individual on another one of the same patch. An additionnal 
 individual death rate results from the total competition exerted on this individual.
\end{itemize}
There is no geographically variable selection: the parameters $b$, $p$, $d$, and $c$ do not depend on the patch, thus the two patches have the same ecological characteristics.
However, depending on the patch's composition, the individual's
behaviour in terms of reproduction and migration may vary in time and space.

\subsection{Dominance} \label{section_dominance}

We assume Mendelian segregation so that any parent transmits each of its alleles with probability $1/2$.
To study the effect of dominance on population differentiation, we contrast two opposite scenarios: complete co-dominance and complete dominance.
Dominance observed in natural populations can differ from these extreme cases, however by studying the limits of the dominance continuum we cover the possible 
population dynamics.
In the complete dominance scenario, individuals with genotypes $AA$ and $Aa$ have the same phenotype, A, whereas individuals with genotype $aa$ have the 
phenotype a. In the co-dominance scenario, heterozygotes express an intermediate phenotype between either homozygotes and we thus consider two possibilities. 
(1) Preference expressed towards heterozygotes will be intermediate between assortative and disassortative mating (COD 1) ($\beta$ for pairs $(AA,AA)$, 
$(aa,aa)$ and $(Aa,Aa)$, $(\beta+1)/2$ for pairs $(AA,Aa)$ and $(aa,Aa)$, and $1$ for pairs $(AA,aa)$) and migration rate varies accordingly because the decision to leave the patch depends on the lack of 
preferred partners in the patch. (2) Heterozygotes express no preference towards any partners (COD2), the preference parameter is thus $\beta$ for pairs $(AA,AA)$ and $(aa,aa)$, and $1$ for all other pairs. Because of this lack of preference, heterozygotes have no reason to look for suitable mate and thus do not migrate.

\subsection{Model} \label{section_model}

Our model can be precisely described as follows. We denote the number of 
individuals of genotype  $\mathfrak{g}$ in the patch $i$ at time $t$ for any $ \mathfrak{g}\in \{AA,Aa,aa\}$, $i\in\{1,2\}$ and $t\geq 0$ by $z_{ \mathfrak{g},i}(t)$. 
Moreover we denote the total number of individuals in the patch $i$ at time $t$ by 
$$N_i(t):=z_{AA,i}(t)+z_{Aa,i}(t)+z_{aa,i}(t).$$
 The parameter $\beta$ measures the strength of the sexual preference: $\beta=1$ means no preference, and a large $\beta$ indicates a strong preference. 
Since we are only investigating assortative mating here, $\beta$ will always be larger than one.
 We denote by $p_\beta( \mathfrak{g}, \mathfrak{g}')$ the preference between two individuals with genotypes $ \mathfrak{g}$ and $ \mathfrak{g}'$, 
respectively. They differ according to the model considered (see Table \ref{table_pref}) and always belong to $[1,\beta]$.
Finally, the parameter $p$ describes the maximum migration rate of an individual. The migration rate of an individual is the product of this parameter $p$, 
of the proportion of 'non-suitable' mates in its patch, and of the function of preferences $p_\beta(.,.)$ between genotypes.\\

The dynamical systems governing the evolution of the population was obtained as follows.
Individuals are assumed hermaphroditic and diploid. At a rate $B>b\beta$, they will reproduce as female and look for a mate. They will encounter uniformly an 
individual reproducing as male from the population.
The probability that they actually mate and produce an offspring is 
$bp_\beta( \mathfrak{g}, \mathfrak{g}')/B$, where the functions $p_\beta$ are detailled in Table~\ref{table_pref}.

In the patch $1$, $AA$ individuals reproducing as female will generate:
\begin{itemize}
 \item an offspring of type $AA$ in the patch $1$ at a rate 
 $$ \frac{Bz_{AA,1}}{z_{AA,1}+z_{Aa,1}+z_{aa,1}}\left( \frac{b\beta}{B}z_{AA,1}+ \frac{bp_\beta(AA,Aa)}{B}\frac{z_{Aa,1}}{2} \right). $$
 Indeed, as we consider a Mendelian segregation, parents of genotypes $AA$ and $Aa$ generate an offspring of type $AA$ with a probability $1/2$ and 
 an offspring of type $Aa$ with a probability $1/2$.
 \item an offspring of type $Aa$ in the patch $1$ at a rate 
\end{itemize}
 $$ \frac{Bz_{AA,1}}{z_{AA,1}+z_{Aa,1}+z_{aa,1}}\left( \frac{b}{B}z_{aa,1}+ \frac{bp_\beta(AA,Aa)}{B}\frac{z_{Aa,1}}{2} \right). $$
Since individuals are hermaphrodites and can reproduce through both male and female pathways, we finally get the following equation for the population 
dynamics on a patch $i \in \{1,2\}$ (where $j$ is the complement of $i$ in $\{1,2\}$):

\begin{equation}
\label{eq_model}
\left\{
 \begin{aligned}
 \dot z_{AA,i}=&\frac{b }{N_i}\left( \beta z_{AA,i}^2+p_\beta(AA,Aa)z_{AA,i}z_{Aa,i}+\frac{p_\beta(Aa,Aa)}{4} z_{Aa,i}^2 \right) \\&-(d+cN_i)z_{AA,i}\\
   &+p \sum_{ \mathfrak{g} \in  \mathcal{G}} \left( \frac{p_\beta(AA,AA)-p_\beta(AA,\mathfrak{g})}{\beta-1} \right) 
   \left(\frac{z_{\mathfrak{g},j}}{N_j}z_{AA,j}-\frac{z_{\mathfrak{g},i}}{N_i}z_{AA,i}\right)\\
 \dot z_{Aa,i}=&\frac{b}{N_i}\left( \frac{p_\beta(Aa,Aa)}{2} z_{Aa,i}^2+p_\beta(AA,Aa)z_{Aa,i}z_{AA,i}\right.\\&+p_\beta(aa,Aa)z_{Aa,i}z_{aa,i} 
  +2 z_{AA,i}z_{aa,i} \Big)-(d+cN_i)z_{Aa,i} \\
 &+p \sum_{ \mathfrak{g} \in  \mathcal{G}} \left( \frac{p_\beta(Aa,Aa)-p_\beta(Aa,\mathfrak{g})}{\beta-1} \right) 
   \left(\frac{z_{\mathfrak{g},j}}{N_j}z_{Aa,j}-\frac{z_{\mathfrak{g},i}}{N_i}z_{Aa,i}\right)\\
 \dot z_{aa,i}=&\frac{b }{N_i}\left( \beta z_{aa,i}^2+p_\beta(aa,Aa)z_{aa,i}z_{Aa,i}\right. \\
 &\left. +\frac{p_\beta(Aa,Aa)}{4} z_{Aa,i}^2 \right) -(d+cN_i)z_{aa,i}
  \\
 &+p \sum_{ \mathfrak{g} \in  \mathcal{G}} \left( \frac{p_\beta(aa,aa)-p_\beta(aa,\mathfrak{g})}{\beta-1} \right) 
   \left(\frac{z_{\mathfrak{g},j}}{N_j}z_{aa,j}-\frac{z_{\mathfrak{g},i}}{N_i}z_{aa,i}\right)
\end{aligned}
\right. 
\end{equation}
where $\mathcal{G}:=\{AA,Aa,aa\}$.
Hence the closer to $\beta$ the preference between two genotypes is, the smaller the individuals' migration rates of individuals with one of these 
genotypes will be
in presence of individuals with the other genotype.
Note that individual migration rates are always between $0$ and $p$.
The equations followed by the dynamics under the three dominance hypotheses are then obtained by replacing the preference functions $p_\beta$ 
by their value, as summarized in Table \ref{table_pref} (see \ref{def_model} for a presentation of the full equations in the different models).

\begin{center}
\captionof{table}{Preference functions assuming dominance hypothesis (COD1), (COD2) and (DOM).}\label{table_pref}
\begin{tabular}{|c|c|c|c|} 
\hline
&(COD1)&(COD2)&(DOM)\\
\hline
$p_\beta(AA,AA)$&$\beta$&$\beta$&$\beta$\\
\hline
$p_\beta(AA,Aa)$&$(\beta+1)/2$&$1$ &$\beta$\\
\hline
$p_\beta(Aa,Aa)$&$\beta$& $1$&$\beta$\\
\hline
$p_\beta(aa,Aa)$&$(\beta+1)/2$&$1$&$1$\\
\hline
$p_\beta(aa,aa)$& $\beta$&$\beta$& $\beta$\\
\hline
$p_\beta(AA,aa)$& $1$&$1$& $1$\\
\hline
\end{tabular}
\end{center}

\subsection{Mathematical analysis}

We carry out a mathematical analysis of the dynamical systems governing the population evolution, fully described in Supplementary Materials. 
Here, we only show some fixed points of the systems, their stability, and obtain some convergence results in the case without migration. 
We used the theory of dynamical systems (in particular Lyapunov functions and the Local Center Manifold Theorem) as well as the 
theory of polynomial functions.

\subsection{Simulations}

To illustrate some dynamics of the dynamical systems governing the evolution of the population sizes, we used the software Mathematica.

We also performed numerical simulations of the dynamical systems presented in \ref{def_model}. 
We investigated the solution of the dynamical systems for $b=2$, $d=1$, $c=0.5$ and 
different values of migration rate $p$ and of preference coefficient $\beta$. For each value of $\beta$ and $p$, we solved the dynamical systems 
for $10\, 000$ different initial conditions chosen as follows. First, we set the size of the population in patch $1$: we considered $100$ values of sizes regularly distributed 
between $1$ and $2*(2*b-d)/c$. Then, we set the size of the population in patch $2$ such that the difference between the two sizes was $0.01$.
This allowed us to reduce the number of parameters explored without changing the results.
Finally, for each 
pair of sizes, we examined $100$ initial conditions randomly chosen. For each patch $i$, we set uniformly at random the proportion $p_{A,i}$ of allele 
$A$ using a uniform random variable between $0.5$ and $1$ in the case where the majority allele was $A$ or between $0$ and $0.5$ in the other case. The proportion 
of $Aa$-individuals in patch $i$ was fixed randomly using a uniform random variable between $0$ and $2*min(p_{A,i},1-p_{A,i})$. The proportion of $AA$-individuals 
and $aa$-individuals in each patch can be easily deduced.\\
\indent For each initial condition, we numerically solved the three dynamical systems using a finite difference method. 
We used a discretization time step $h$ equals to $0.005$. We assumed that a 
stationary state was reached as soon as the norm of the difference of the solution between two time steps was lower than $\varepsilon=10^{-6}$. 
Different values of $h$ and $\varepsilon$ were tested and the chosen pair of values provided the best trade-off between algorithm rapidity and result accuracy. Once the stationary state was found, we considered that the final population in a 
patch was monomorphic when the proportion of one of the alleles was larger than 99\%. Otherwise, it was considered as polymorphic equilibrium.

\section{Results}

To highlight the influence of dominance on the spatial dynamics of the trait, we first contrast our findings with  
results obtained in a previous study (\cite{coron2016stochastic}) where individuals were haploids.

\subsection{Haploid model}

In the haploid model, preference functions were $p_\beta(A,A)=p_\beta(a,a)=\beta$ and $p_\beta(A,a)=1$.
The system may converge to two different types of equilibria depending on initial conditions. 
Equilibria can be expressed using $\zeta$, the equilibrium population size in a patch when there is only one type of 
individuals ($A$ or $a$),
\begin{equation} \label{def_zeta} \zeta:= \frac{\beta b-d}{c}. \end{equation} 
Let us denote by $z_{\alpha,i}(t), \alpha \in \{A,a\}, i \in \{1,2\}, t \geq 0$ the $\alpha$ population size in the patch $i$ at time $t$. Then
\begin{itemize}
 \item[$\bullet$] If $z_{A,i}(0)>z_{a,i}(0), i = 1,2$, the population sizes converge to $(z_{A,1},z_{a,1},z_{A,2},z_{a,2})=(\zeta,0,\zeta,0)$.
 \item[$\bullet$] If $z_{A,1}(0)>z_{a,1}(0)$ and $z_{A,2}(0)<z_{a,2}(0)$ the population sizes converge to $(z_{A,1},z_{a,1},z_{A,2},z_{a,2})=(\zeta,0,0,\zeta)$.
\end{itemize}
The same conclusions hold by symmetry if we replace $A$ by $a$ in the first line and $1$ by $2$ in the second one.
Hence when initial conditions are asymmetrical (more $A$ individuals in one patch and more $a$ individuals in the other patch) and when $p$ 
is positive, regardless of its value, it is enough to entail the end of gene flux between the two populations and 
to generate two differentiated populations.

\subsection{Diploid models with co-dominance}

We then compared the results obtained in haploid populations with the diploid model assuming co-dominance between alleles, 
\textit{i.e.} when heterozygotes express an intermediate phenotype between either homozygotes. 
This assumption exhibits high similarity with the haploid case, although the behaviour of heterozygous individuals displaying intermediate phenotypes might 
influence model outputs. We contrasted two hypotheses:
 (1) mating success between homozygotes and heterozygotes were half less 
than between the same genotypes (COD1), 
(2) heterozygotes had no preferences and were not preferred by homozygotes (COD2) (preference parameter $1$ for any encountering pair with an $Aa$ individual). 
We then investigated the equilibrium reached in both populations using a mathematical analysis and
simulations assuming different preference coefficients ($\beta$) and migration rates ($p$).

\subsubsection{System without migration} \label{sectionnomig}

When there is no migration ($p=0$) we are able to give necessary and sufficient conditions on the initial 
numbers of individuals with genotypes $AA$, $Aa$ and $aa$ in both patches for the system to converge to the different fixed points, 
in both codominant models.
Recall that $z_{\alpha \alpha',i}(t)$ denotes the number of individuals with genotype $\alpha \alpha'$ ($AA$, $Aa$ or $aa$) in the patch $i$ ($1$ or $2$)
at time $t$.
As there is no migration, it is enough to consider the patch $1$.
Under hypothesis (COD1) the system follows the equations:
\begin{equation*}
\left\{
 \begin{aligned}
& \dot z_{AA}=\frac{b }{N}\left( \beta z_{AA}^2+\frac{\beta+1}{2}z_{AA}z_{Aa}+\frac{\beta}{4} z_{Aa}^2 \right) -(d+cN)z_{AA}\\
& \dot z_{Aa}=\frac{b}{N}\left( \frac{\beta}{2} z_{Aa}^2+\frac{\beta+1}{2}z_{Aa}(z_{AA}+z_{aa})+2 z_{AA}z_{aa} \right)-(d+cN)z_{Aa}\\
& \dot z_{aa}=\frac{b }{N}\left( \beta z_{aa}^2+\frac{\beta+1}{2}z_{aa}z_{Aa}+\frac{\beta}{4} z_{Aa}^2 \right)-(d+cN)z_{aa},
 \end{aligned}
\right.
\end{equation*}
where $N=z_{AA}+z_{Aa}+z_{aa}$. This system
admits two stable fixed points, $(z_{AA,1}=\zeta,z_{Aa,1}=0,z_{aa,1}=0)$ (fixation of allele $A$) and 
$(z_{AA,1}=0,z_{Aa,1}=0,z_{aa,1}=\zeta)$ (fixation of allele $a$), where we recall that $\zeta$ has been defined in \eqref{def_zeta}, 
and one unstable fixed point with persistence of all three genotypes 
(see \ref{cod1_sans_mig}). If $z_{AA,1}(0)>z_{aa,1}(0)$, allele $A$ gets fixed, and the numbers of individuals converge to the stable equilibrium $(\zeta,0,0)$.
If $z_{AA,1}(0)<z_{aa,1}(0)$, the exact same conclusion holds with $a$ replacing $A$.
Finally, if $z_{AA,1}(0)=z_{aa,1}(0)$, 
the system converges to the unstable equilibrium.
Under hypothesis (COD2), the equations driving the population dynamics are:
\begin{equation*}
\left\{
 \begin{aligned}
& \dot z_{AA}=\frac{b }{N}\left( \beta z_{AA}^2+z_{AA}z_{Aa}+\frac{1}{4} z_{Aa}^2 \right) -(d+cN)z_{AA}\\
& \dot z_{Aa}=\frac{b}{N}\left( z_{Aa}^2+z_{Aa}(z_{AA}+z_{aa})+2 z_{AA}z_{aa} \right)-(d+cN)z_{Aa}\\
& \dot z_{aa}=\frac{b }{N}\left( \beta z_{aa}^2+z_{aa}z_{Aa}+\frac{1}{4} z_{Aa}^2 \right)-(d+cN)z_{aa}.
 \end{aligned}
\right.
\end{equation*}
We get the same result, except that the unstable fixed point is different (see \ref{cod2_sans_mig} for details).

\subsubsection{Fixed points of the system with migration}

When there is a migration between the two patches ($p>0$), the dynamics is much more complex due to the increase in dimensionality
and we were unable to obtain convergence results analytically. 
However we were able to describe some of the fixed 
points and determine their stability.
There are four fixed points with monomorphic populations in both patches: 
fixation of $A$ in both patches $(z_{AA,1}=\zeta,z_{Aa,1}=0,z_{aa,1}=0,z_{AA,2}=\zeta,z_{Aa,2}=0,z_{aa,2}=0)$, 
fixation of $a$ in both patches $(0,0,\zeta,0,0,\zeta)$, or fixation of different alleles in the two patches, 
$(\zeta, 0,0,0,0,\zeta)$ or $(0,0,\zeta,\zeta,0,0)$.
For the two codominant models, the first two fixed points are stable for all parameters values, and the two last fixed points are stable if
$p<p_{crit}=b\beta(\beta-1)/2$ and unstable if $p>p_{crit}$.
The fact that $p_{crit}$ is proportional to $b$ is expected, as we could reduce the number of parameters in Equation \eqref{eq_model} by taking 
$\tilde{b}=b/p$, $\tilde{d}=d/p$, $\tilde{c}=c/p$ and $\tilde{p}=1$.
$p_{crit}$ is increasing with $\beta$. This is also expected as a higher $\beta$ means a higher local advantage due 
to sexual preference for the allele in majority in a patch. Hence a higher $p$ is needed to counteract this advantage 
and create a gene flux. The form of $p_{crit}$ as a function of $\beta$ however is difficult to infer, except the fact 
that it becomes null when $\beta=1$. Indeed, in this last case, the model becomes completely neutral and an infinity 
of equilibria are possible (total population size $(b-d)/c$ in each patch and any proportions of $AA$, $Aa$, and $aa$ individuals).
This result contrasts with the haploid case, where the fixed points 
with a genotype in each patch
were stable for all the values of the migration parameter $p$ (see \cite{coron2016stochastic}).
This may be explained by the fact that the migration of heterozygotes has a major impact on the population behaviour, as explicited below using numerical simulations. 
In the figure displaying the behaviour of the model in the codominant case, we have indicated the curve $p_{crit}=b\beta(\beta-1)/2$ (see Fig. \ref{fig1}).

\subsubsection{Conditions for differentiated populations}

\begin{figure}[h]
    \centering
     \includegraphics[width=5cm,height=5cm]{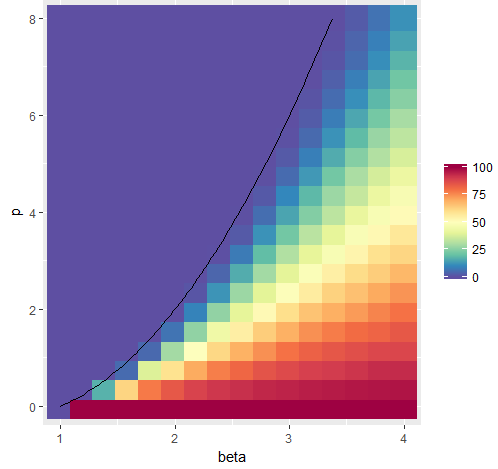}
     \caption{Conditions for the emergence of differentiated populations
 with fixation of allele $A$ in population 1 and $a$ in populations 2, assuming hypothesis (1) regarding codominance (outcomes are very similar under the alternative 
 hypothesis of codominance).
 Initial conditions are asymmetric, with more than $50\%$ of $A$ allele in patch $1$, and 
 more than $50\%$ of $a$ allele in patch $2$.
 Color indicates the percentage of simulations where $A$ get fixed in population $1$ and $a$ in population $2$. 
 Black line shows the limit of the stability of the equilibrium with differentiated populations, $p_{crit}=b\beta(\beta-1)/2$ (see previous section).}
   \label{fig1}
   \end{figure}

Under both assumptions regarding co-dominance, differentiated populations with fixation of different phenotypes in the two populations can emerge only when the 
frequencies of allele $a$ are asymmetrical at initial 
state, with frequency of allele $a$ smaller than 1/2 in one population and larger than 1/2 in the other population (see \ref{app_maj_allele}).
As shown in \ref{cod1_mig} and \ref{cod2_mig}, this equilibrium cannot be reached when $p>b\beta(\beta-1)/2$. However, an initial asymmetry 
and the condition $p<b\beta(\beta-1)/2$ 
are not enough to ensure differentiated populations, as can be seen from Fig. \ref{fig1}.

The number of simulations exhibiting differentiation increases when the preference coefficient $\beta$ increases as expected, 
and decreases when the migration rate $p$ increases.
Indeed when migration increases or preference decreases, this equilibrium becomes unstable (see black curve in Fig.1) and is 
no longer reached.
This contrasts with the dynamics of the model when individuals are haploid, where
differentiated populations emerge as soon as the frequencies of allele $a$ are 
asymmetrical at initial 
state, regardless of the value of the migration strength, $p$, and preference $\beta>1$.
However, the negative effect of migration on population differentiation is 
observed under both assumptions regarding co-dominance, including hypothesis (2) where heterozygotes never migrate.
This indicates that their presence may be enough to maintain the migration of both $A$ and $a$ homozygotes across populations when $p$ is large, 
even if they do not move themselves.

\subsubsection{Fixation of a single phenotype throughout both populations}

\begin{figure}[h]
    \centering
     \includegraphics[width=9cm,height=9cm]{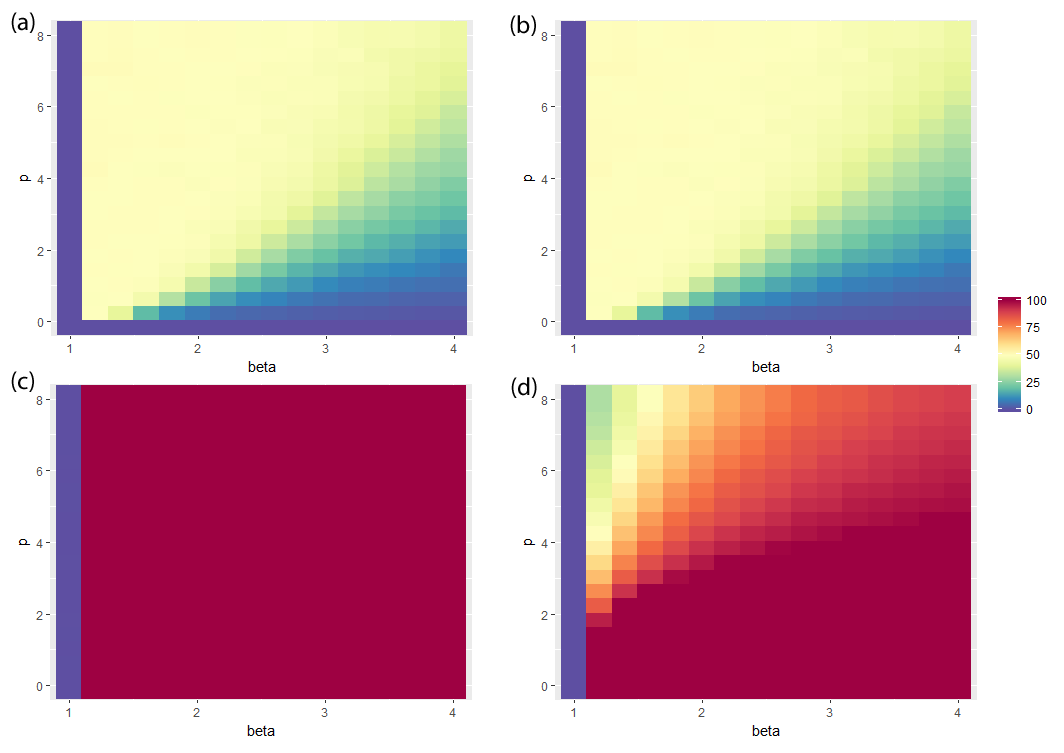}
     \caption{Conditions of fixation of allele $a$ in both populations. Columns
 represent simulations assuming hypotheses (1) and (2) regarding
heterozygotes behaviour. Rows differ in initial conditions. First row:
 Asymmetrical frequency of allele $a$ (more frequent in population 2);
Second row: Frequency of allele $a$ greater than 0.5 in both populations.
The colors indicate the percentage of simulations where $a$ get fixed in both populations.}
   \label{fig2}
\end{figure}

Under hypothesis (COD1), when migration increases, the two populations tend to be more homogeneous, leading to the fixation of a single allele throughout 
both populations, whatever the initial frequencies. 
The identity of the fixed allele depends on the initial frequency because of positive frequency-dependent selection triggered
by homogamy. In cases where initial frequencies of allele $a$ are asymmetrical in the two populations, we indeed observed  
fixation of allele \textit{a} in $50\%$ of simulations 
when migration increases (Fig 2a-b). 
As the system is symmetrical in $a$ and $A$,
in the other half of simulations, the fixation of allele $A$ was observed (data not shown).
When allele $a$ initially predominates in both patches, the fixation of allele $a$ is observed in all simulations for hypothesis (COD1) regarding heterozygote 
behaviour (Fig. 2c). 
When heterozygotes express a preference for themselves and migrate, their migration leads to a fast equalization of 
the numbers of individuals
with the same genotype in the two populations (see Fig. \ref{fig0dyn}) (i.e. $n_{A A,1}$ (resp. $n_{Aa,1}$, $n_{aa,1}$) very close to $n_{A A,2}$ 
(resp. $n_{Aa,1}$, $n_{aa,2}$)).
Once this equalization is reached, the migration does not influence the dynamics since the numbers of emigrants and immigrants of each population are the same. 
Both patches evolve as if they were isolated.
From the study of the system without migration in Section \ref{sectionnomig}, we know that there is no polymorphic stable equilibrium in one isolated patch.
As a consequence, the same allele gets fixed in both patches.

 \begin{figure}[h]
    \centering
     \includegraphics[width=8cm,height=3.5cm]{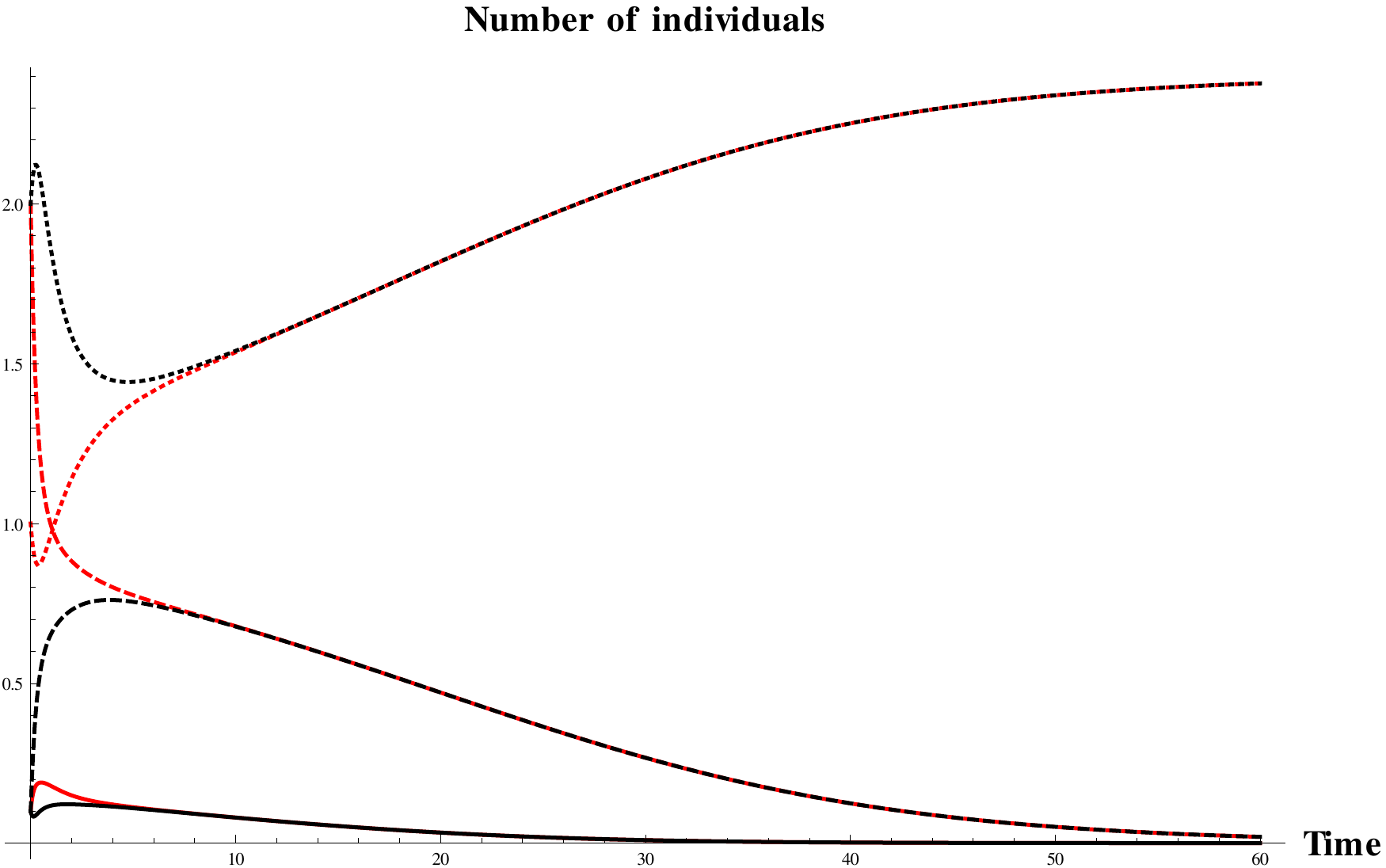} 
 \caption{Number of individuals in both patches under hypothesis (1). $\beta=1.1$ and $p=5$.
     Colors: the dynamics of the number of individuals in the patch $1$ (resp. $2$) are represented in red (resp. black), 
     the dynamics of the number of individuals with genotype $AA$ (resp $Aa$, $aa$) are represented using a full line (resp. dashed line, dotted line) 
     The initial conditions are $z_{AA,1}(0)=z_{AA,2}(0)=z_{Aa,2}(0)=0.1$, $z_{Aa,1}(0)=2$, $z_{aa,1}(0)=1$ and $z_{aa,2}(0)=2$.}
\label{fig0dyn}
     \end{figure}

Assuming no preference and no migration of heterozygotes (hypothesis (COD2)) leads to the overall fixation of one of the two alleles
only when migration is limited or preference 
is high (Fig. 2d). Otherwise in many simulations, polymorphism persists in both populations.

\subsubsection{Polymorphic equilibria}

The dynamics leading to these equilibria are highly non-monotonic and migration persists at equilibrium, which explains the difficulty to study the model 
analytically. The presence of individuals of type $Aa$ maintains migration of $AA$ and $aa$ individuals at equilibrium
 (see Fig. \ref{fig1dyn} and \ref{cam1}).
According to numerical simulations, stable polymorphic equilibria are of the following form:
\begin{enumerate}
 \item[-] one 'large' population (with a size larger than $(\beta b-d)/c$) with homozygote type in large majority, $AA$ or $aa$ respectively, 
 \item[-] one 'small' population (with a size smaller than $(b-d)/c$) with a large proportion of individuals of type $aa$ (resp. $AA$) and $Aa$.
 \end{enumerate}
 \begin{figure}[h]
    \centering
    \includegraphics[width=8cm,height=3.5cm]{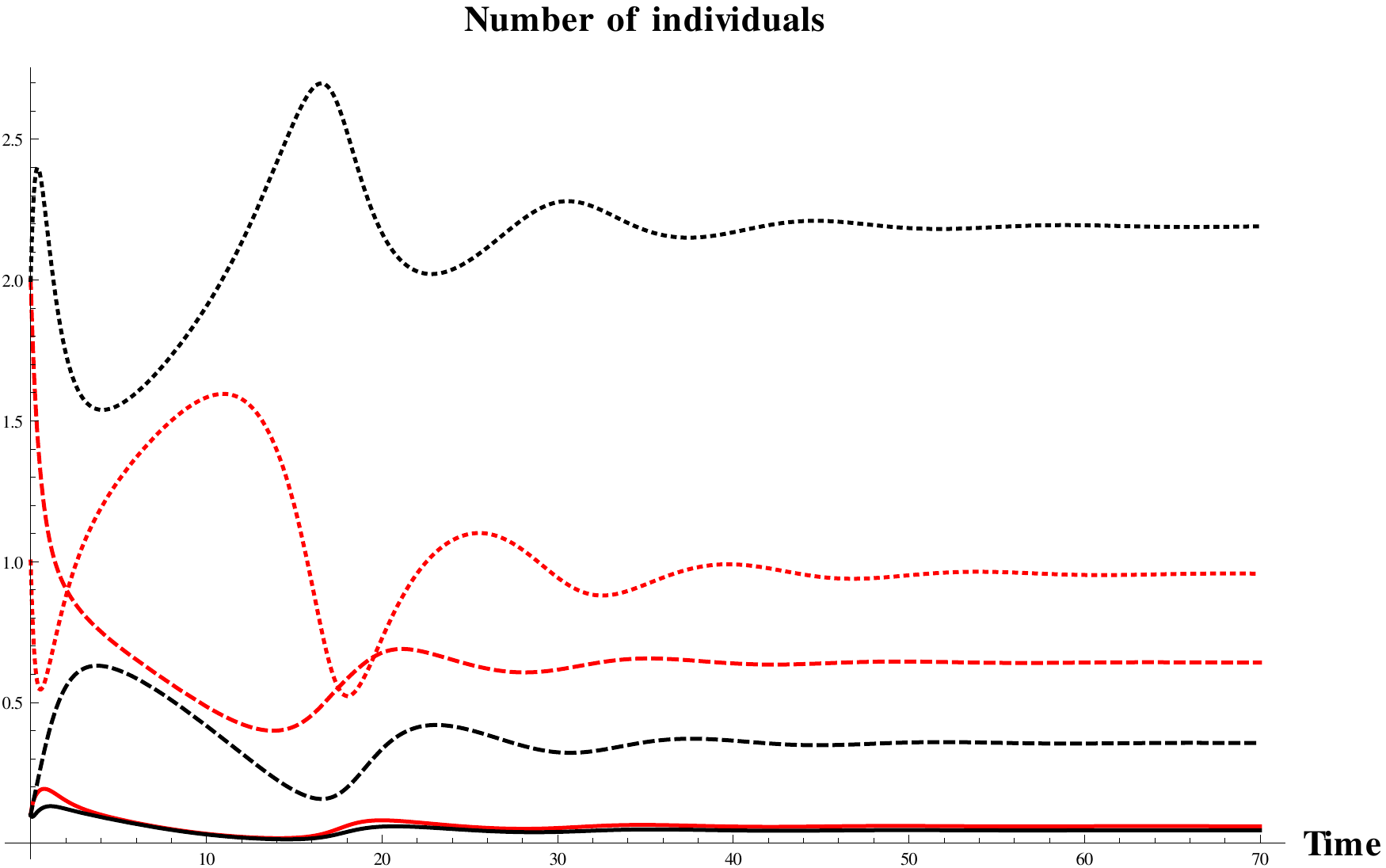}  \hfill \\ \hspace{.5cm}   \includegraphics[width=8cm,height=3.5cm]{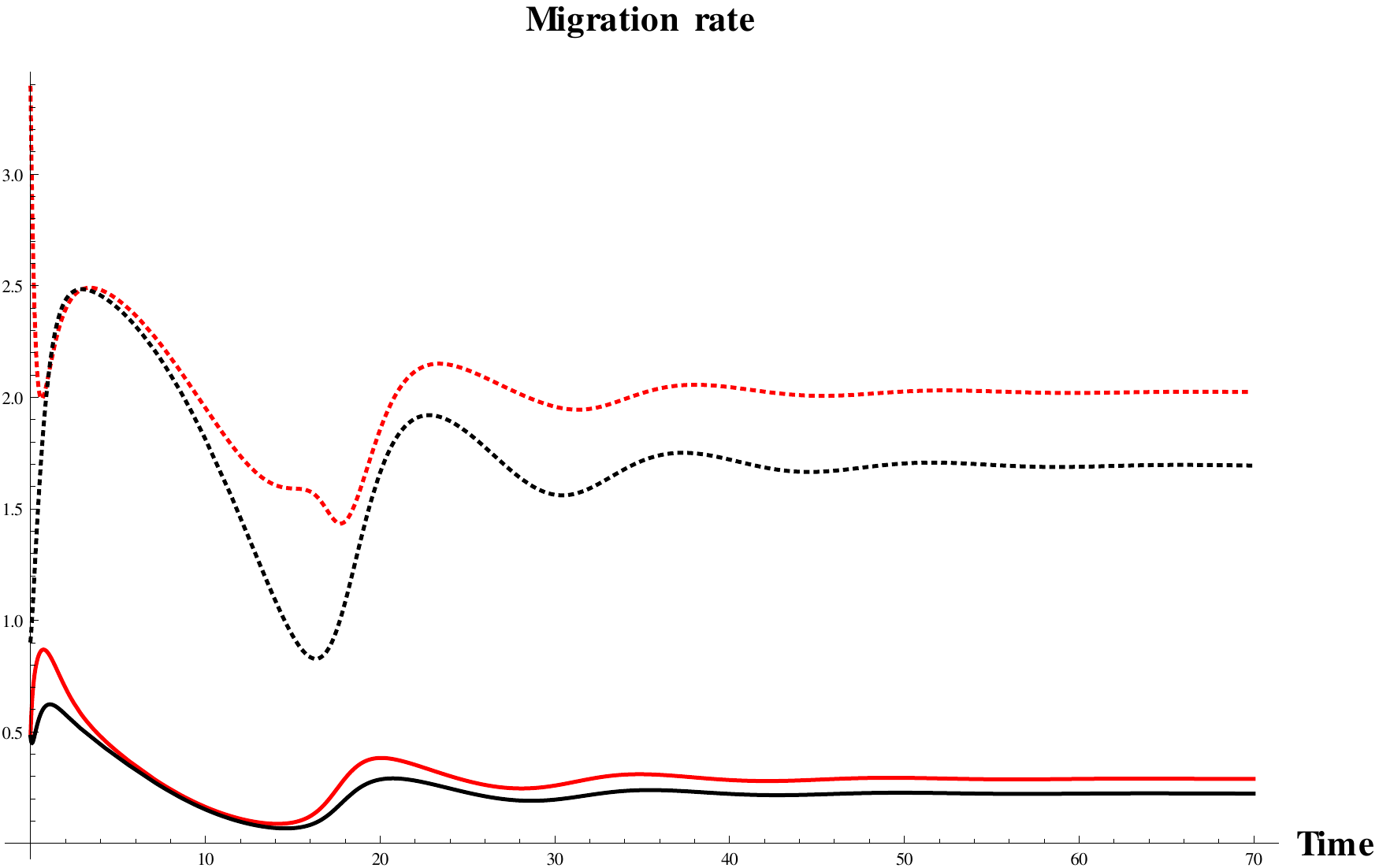}
 \caption{Number of individuals carrying each genotype and their migration rates under hypothesis (2). $\beta=1.1$ and $p=5$, with the parameters, $(b\beta-d)/c=2.4$. The initial conditions are $z_{AA,1}(0)=z_{AA,2}(0)=z_{Aa,2}(0)=0.1$, $z_{Aa,1}(0)=2$, $z_{aa,1}(0)=1$ and $z_{aa,2}(0)=2$.
     Colors: (a) the dynamics in the patch $1$ (resp. $2$) are represented in red (resp. black), the dynamics of the number of individuals with genotype $AA$ (resp $Aa$, $aa$) are represented using full (resp. dashed, dotted) lines; (b) the migration from patch $1$ to patch $2$ (resp. $2$ to $1$) 
     is drawn in red (resp. black), the migration of $aa$-individuals (resp. $AA$) is represented using dotted (resp. full) lines.
     }
\label{fig1dyn}
     \end{figure}
\begin{figure}[h]
    \centering
     \includegraphics[width=8cm,height=4cm]{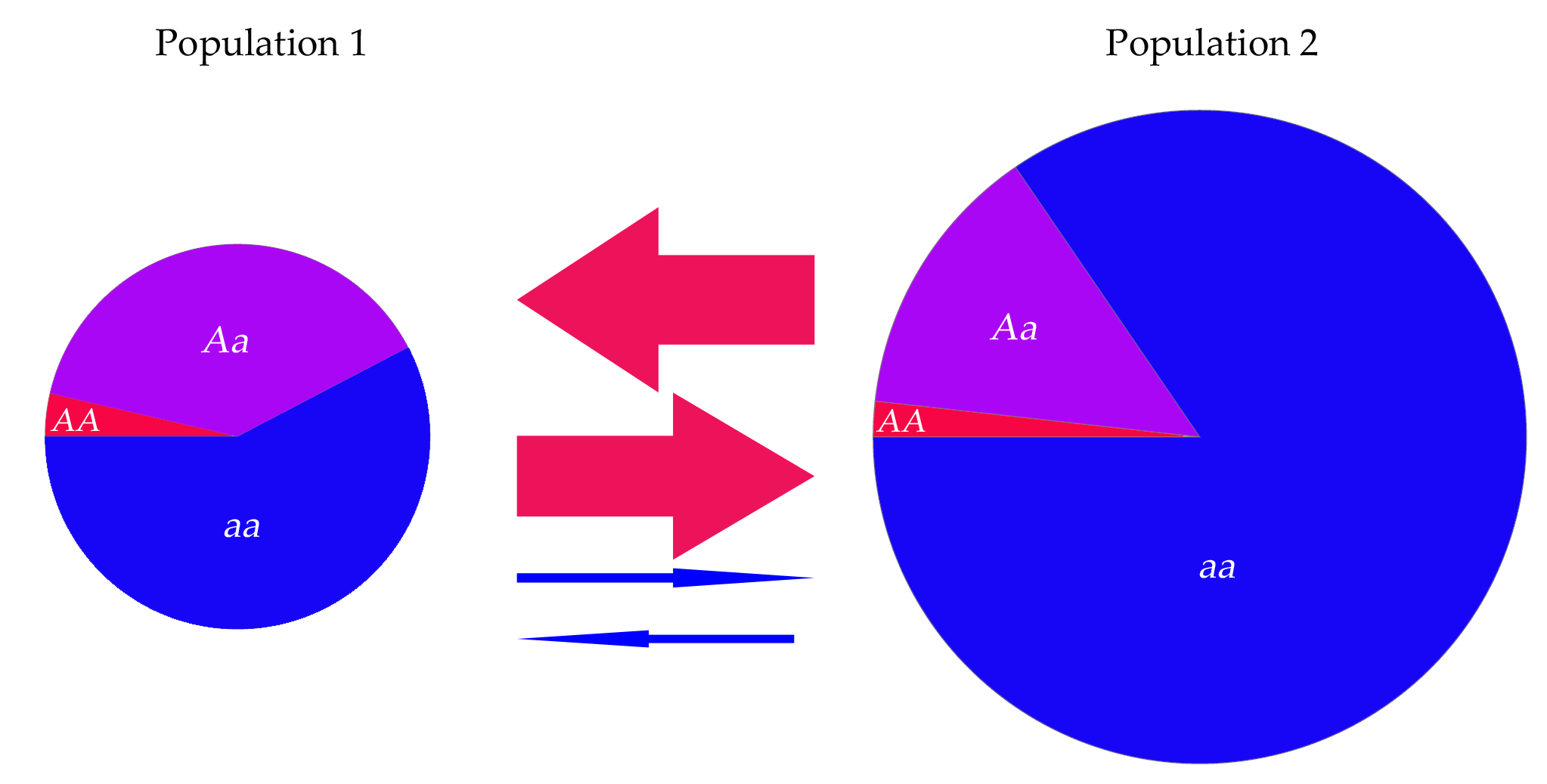}  \hfill   
     \caption{Proportions of different genotypes and migration rates between the two patches at equilibrium. $z_{AA,1}\sim 0.06, z_{Aa,1}\sim 0.64,
     z_{aa,1}\sim 0.96, z_{AA,2}\sim 0.45, z_{Aa,2}\sim 0.36, z_{aa,2}\sim 2.19$. 
     Total population in patch 1, $1.77$, total population in patch 2, $3$.
     Same parameters, initial conditions and colors than in 
     Figure \ref{fig1dyn}
    }
\label{cam1}
     \end{figure}
 This result is rather unexpected since $(\beta b-d)/c$ is the maximal equilibrium population size of an isolated population whereas $(b-d)/c$ 
 is smaller than the minimal one. Indeed, the maximal birth rate of any individual is $\beta b$, which is its birth rate when surrounded by individuals with the 
 same phenotype, whereas 
 its minimal birth rate is $b$ if surrounded by individuals with different phenotypes. In all cases, its death rate is the sum of its 
 natural death rate, $d$, 
 and of the competition death rate which is equal to the product of $c$ and of the total population size in its patch. As a consequence, 
 in an isolated patch without any migration, this leads to a maximal (resp. minimal) equilibrium equal to $(\beta b-d)/c$ (resp. larger than $(b-d)/c$).

Actually, polymorphism can be maintained through an equilibrium between growth and migration. Indeed, denoting by $N$ the population size in the less 
 populated patch, we deduce that the number of births by time unit in this patch always stays above $b*N$, whereas the number of deaths is 
  $(d+c*N)*N$. As $N$ is smaller than $(b-d)/c$, the number of deaths can not exceed $(d+(b-d))N=b*N$ and there are more births 
 than deaths in the less populated patch.
Since the equilibrium is maintained, there is a continuous excess flux of migration towards the other patch.
 Only $AA$ and $aa$ individuals migrate, which explains why the proportion of $Aa$ individuals remains relatively high.
On the contrary, as the population size of the most populated patch is larger than $(\beta b-d)/c$, there are more deaths than births in it, but the latter
constantly receives individuals of types $AA$ and $aa$ from the less populated patch, which maintains the polymorphism.

Moreover, equilibria with polymorphism are observed only when the migration rate $p$ is high with respect to the preference coefficient $\beta$ (see Fig. \ref{fig2}d), which reinforces the idea that polymorphism is maintained by a trade-off between migration and selection.

\subsection{Dominance between alleles}

We now assume a total dominance of the allele $A$, so that $Aa$ heterozygotes display the same phenotype and behaviour as $AA$ homozygotes.

\subsubsection{System without migration}

When there is no migration ($p=0$), 
the system admits two stable fixed points, $(z_{AA,1}=\zeta,z_{Aa,1}=0,z_{aa,1}=0)$ (fixation of allele $A$) and 
$(z_{AA,1}=0,z_{Aa,1}=0,z_{aa,1}=\zeta)$ (fixation of allele $a$), and one unstable fixed point with persistence of all three genotypes (see \ref{dom_sans_mig}).
In this case we were only able to give a sufficient condition on the initial number of individuals of different types for the system to converge 
to the stable fixed point characterizing the fixation of allele $A$: if $z_{AA,1}(0)\geq z_{aa,1}(0)$, the solution converges 
to the stable equilibrium $(\zeta,0,0)$.

\subsubsection{Fixed points of the system with migration} \label{fixed_pts_dom}

When there is a migration between the two patches ($p>0$), once again the dynamics is much more complex and we were unable to obtain convergence results. 
However we were able to describe some of the fixed 
points and determine their stability (see \ref{dom_mig}).
There are four fixed points with monomorphic populations in both patches, which are the same as in the case of codominance: 
fixation of $A$ in both patches $(z_{AA,1}=\zeta,z_{Aa,1}=0,z_{aa,1}=0,z_{AA,2}=\zeta,z_{Aa,2}=0,z_{aa,2}=0)$, 
fixation of $a$ in both patches $(0,0,\zeta,0,0,\zeta)$, or fixation of different alleles in the two patches, 
$(\zeta, 0,0,0,0,\zeta)$ and $(0,0,\zeta,\zeta,0,0)$.
The first two fixed points are stable for all the parameters values. 
The two last fixed points admit five negative eigenvalues and one null eigenvalue and we were not able to conclude on their stability. 
Since the recessive allele $a$ has no influence on the phenotypes of $Aa$ heterozygotes, it behaves neutrally with respect to mating and migration behaviour 
when occurring in heterozygotes. This neutral behaviour may substantially complexify the two-populations dynamics observed here.

Numerical simulations were then performed to study the influence of preference and migration parameters on the equilibria reached.

\subsubsection{Fixation of the dominant allele A}
In many simulations assuming an asymmetrical initial state (more $a$ alleles in one patch and more $A$ alleles in the other patch), the fixation of the dominant allele $A$ was observed, when preference was strictly larger than 1 
(Fig. \ref{fig5}a). Dominance of allele $A$ makes the phenotype $A$ (displayed by both $AA$ and $Aa$ genotypes) more frequent therefore provoking its fixation through 
assortative mating advantage. When allele $A$ was in minority at initial state however, its fixation happened in much less simulations, because the large initial 
frequency of genotypes $aa$ displaying phenotype a overrides the dominance effect (Fig. \ref{fig5}b). 

\begin{figure}[h]
    \centering
     \includegraphics[width=9cm,height=5cm]{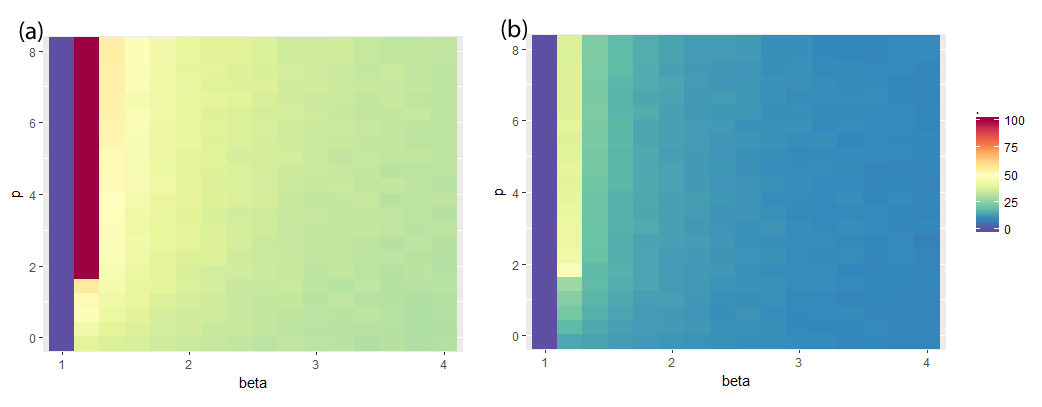}
     \caption{Conditions of fixation of the dominant allele $A$ in both
 populations. Columns differ in initial conditions. First column:
Asymmetrical frequency of allele $a$ (more frequent in population
 2); Second column: Frequency of allele $a$ greater than 0.5 in both
populations.
The color indicates the percentage of simulations where $A$ get fixed in both populations.}
   \label{fig5}
\end{figure}

Interestingly, although increasing migration promotes the fixation of the dominant allele $A$ through its homogenizing effect, increasing preference tends to 
weaken this fixation. This non-trivial effect may stem from the limitation of migration when homogeneous population emerges: individuals matching the predominant 
phenotype within a patch migrate less.  Therefore, when preference is strong and populations are initially differentiated, equilibria with fixation of 
different alleles in the two populations might be frequent, despite the frequency-dependent advantage of the phenotype carried by the dominant allele.

\subsubsection{Conditions for differentiated populations}
\label{sec:diffpopDOM}

\begin{figure}[h]
    \centering
     \includegraphics[width=9cm,height=5cm]{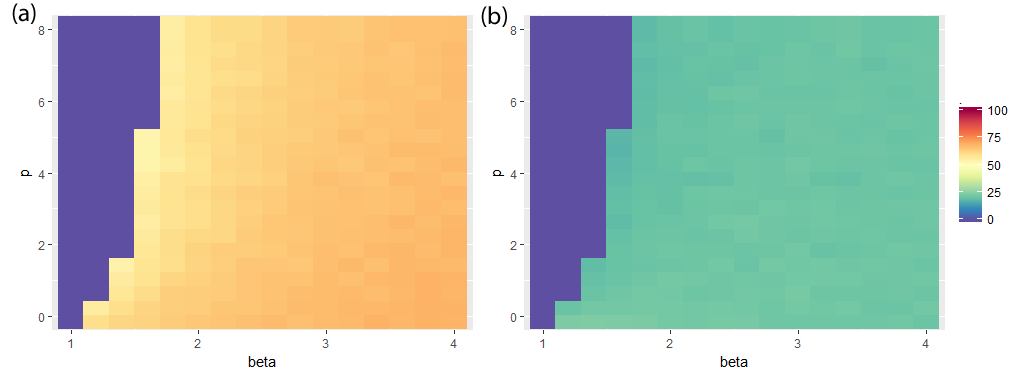}
     \caption{Conditions for the emergence of differentiated populations
(with fixation of allele $A$ in population 1 and $a$ in population 2).
 Columns differ in initial conditions. First column: Asymmetrical
frequency of allele $a$ (more frequent in population 2); Second column:
Frequency of allele $a$ greater than 0.5 in both populations.
The colors indicate the percentage of simulations where $A$ get fixed in population $1$ and $a$ in population $2$.}
   \label{fig6}
\end{figure}

As observed in Fig. \ref{fig6}a, simulations with initial differentiation mainly result in the fixation of different alleles in populations 1 and 2 when the preference 
coefficient $\beta$ increases and migration strength is not too high. 
However, it is worth noting that differentiated populations subsist for values of the migration parameter $p$ much higher than in both codominant cases 
(see Fig. \ref{fig1}).
When allele $a$ is predominant at initial state in both patches, 
simulations mostly lead to the fixation of the 
recessive allele $a$ (data not shown).
Note that around $25 \%$ of simulations lead to differentiated populations whereas in the haploid version of the model studied in 
(\cite{coron2016stochastic}) differentiated populations could emerge only when the initial state was asymmetrical.
Note also that in some parameters regions (small preference parameter $\beta$ or high migration strength $p$) no simulation leads to 
differentiated populations. This may indicate that equilibria $(\zeta,0,0,0,0,\zeta)$ and $(0,0,\zeta,\zeta,0,0)$ (see Section \ref{fixed_pts_dom}) 
are unstable for such parameters.

\subsubsection{Polymorphic equilibria}

 \begin{figure}[h]
    \centering
     \includegraphics[width=8cm,height=3.5cm]{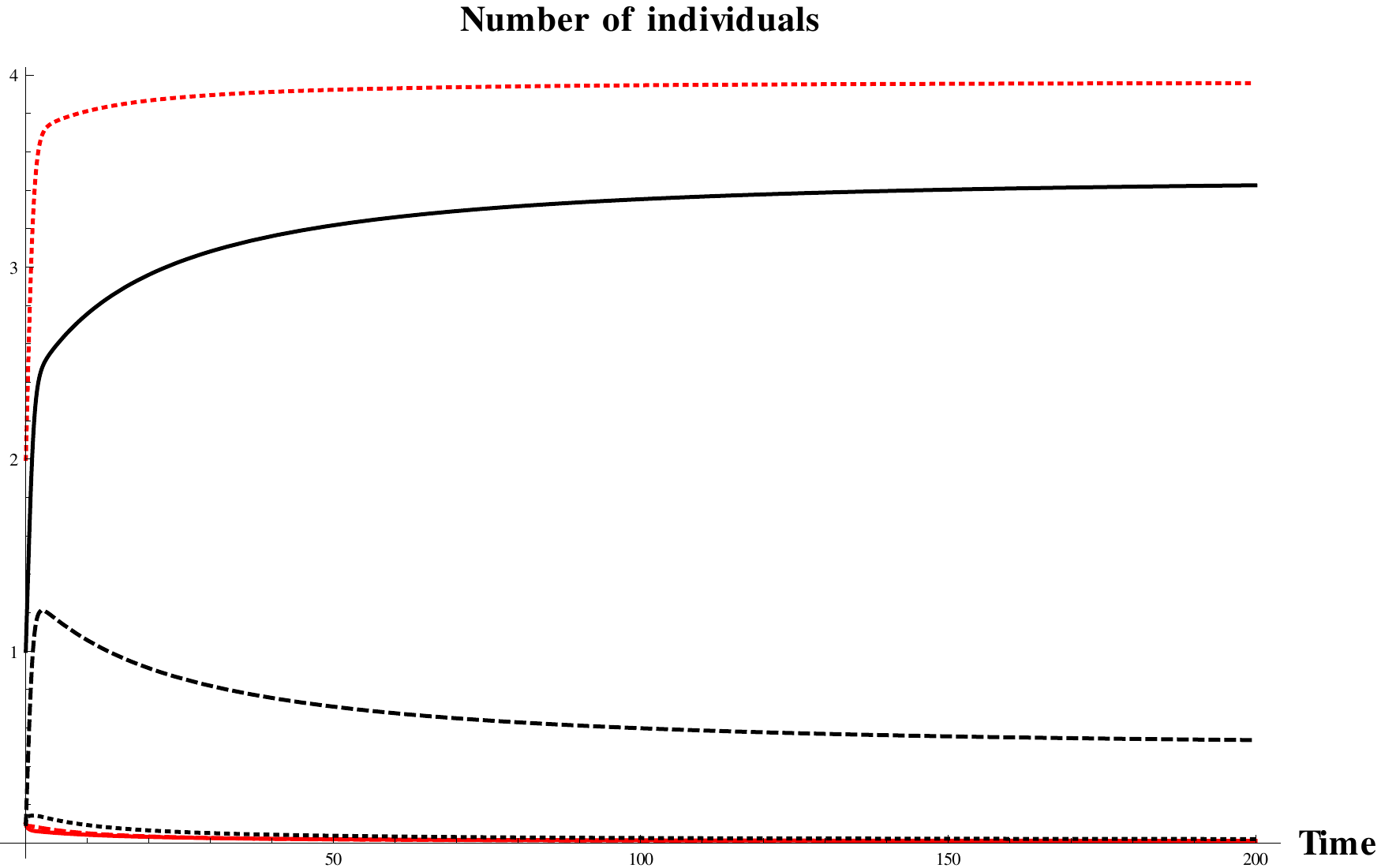}  \hfill \\ \hspace{.5cm}  \includegraphics[width=8cm,height=3.5cm]{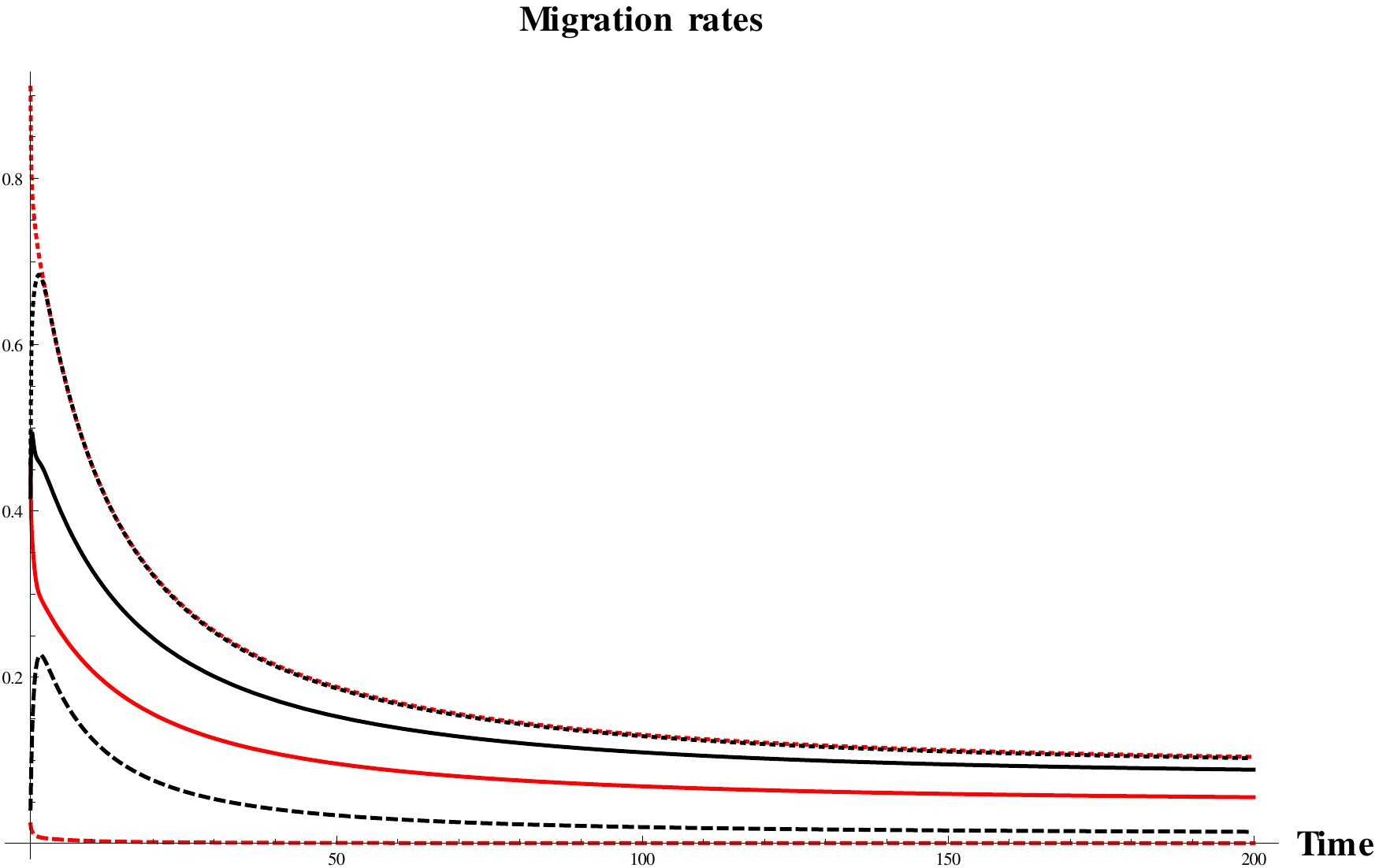}
     \caption{Population sizes and migrations under dominance hypothesis.
     $\beta=1.5$ and $p=5$. The initial conditions are $z_{aa,1}(0)=0.5$ and $0.1$ for the initial number of individuals 
     with the other genotypes.
     Colors: (a) the dynamics in the patch $1$ (resp. $2$) are represented in red (resp. black), the dynamics of the number of individuals with genotype 
     $AA$ (resp $Aa$, $aa$) are represented using full (resp. dashed, dotted) lines; (b) the migration from patch $1$ to patch $2$ (resp. $2$ to $1$) is 
     drawn in red (resp. black), the migration of $aa$-individuals (resp. $AA$) is represented using dotted (resp. full) lines.
     }
\label{fig2dyn}
     \end{figure}
     
In some cases, even if initial conditions are asymmetrical (more $a$ alleles in one patch and more $A$ alleles in the other patch), genetic polymorphism can
persist in one or both populations. However, the populations are almost phenotypically monomorphic in both patches: almost only $aa$ individuals in one patch, 
and almost only $AA$ and $Aa$ individuals in the other one (see Fig. \ref{fig2dyn} and \ref{cam2}). 
As a consequence, individuals reproduce at their maximal birth rate $\beta b$ and do not migrate, and the population sizes in both patches 
are close to their carrying capacity $(b \beta-d)/c$.

 \begin{figure}[h]
    \centering
     \includegraphics[width=8cm,height=3.5cm]{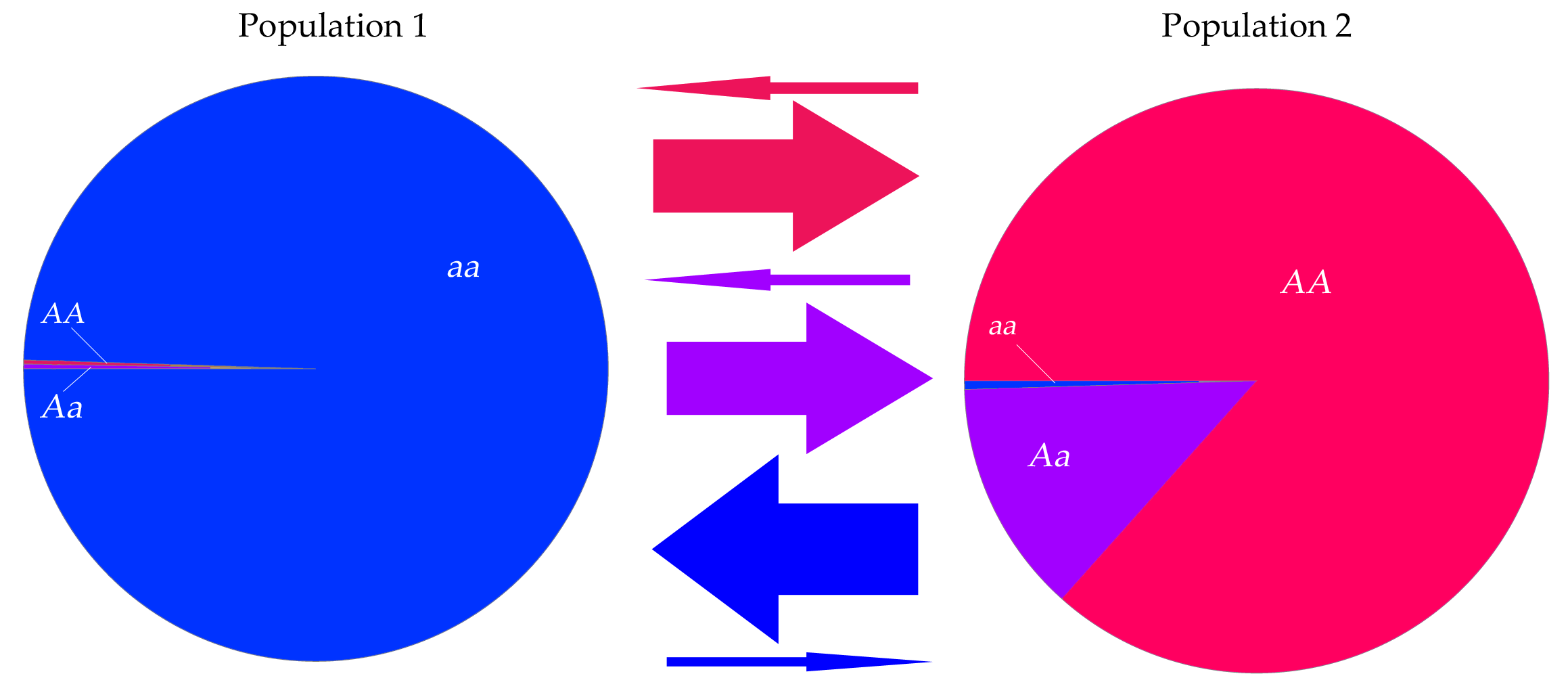}  \hfill   
     \caption{Proportions of different genotypes in the two patches at equilibrium. $z_{AA,1}\sim 0.01, z_{Aa,1}\sim 0.009,
     z_{aa,1}\sim 3.96, z_{AA,2}\sim 3.45, z_{Aa,2}\sim 0.51, z_{aa,2}\sim 0.02$. 
     Total population sizes in patch 1 and 2 very close to $(b\beta-d)/c=4$.
     Same parameters, initial conditions and colors than in 
     Figure \ref{fig2dyn}
     }
\label{cam2}
     \end{figure}

\subsection{Constant migration}

In order to understand better the role of frequency dependent migration, 
we explored the behaviours of the different genotypes when the migration rate is constant, equal to $p$ and identical for all individuals.

In this case, there are three 
possible types of equilibria $z$ (see \ref{studyconstantpim} for the proof):
\begin{itemize}
 \item[$(1)$] Either $z=0$
 \item[$(2)$] Or all the coordinates of $z$ are positive (the three genotypes are present in the two patches)
 \item[$(3)$] Or there is only one type of individuals, $AA$ or $aa$, and the population size is the same in the 
 two patches, 
\begin{equation*} \zeta:= \frac{\beta b-d}{c}. \end{equation*} 
\end{itemize}

Although a strict division of the population into two monomorphic subpopulations with different allele ($A$ in a patch and $a$ in the other one)
is not possible, there are cases where $AA$ is in large majority (more than $90\%$) in one patch, and $aa$ 
 in the other patch. This behavior can be observed under both codominant hypotheses (simulations not shown) as well as under the dominant hypothesis, as presented in Fig. \ref{figmigbasale}. However, note that the preference parameter ($\beta$) has to be very large and the migration parameter ($p$) very small. 
The ratio between the preference parameter and the migration parameter has to be much larger than in Section~\ref{sec:diffpopDOM} with our initial model.

Actually, this condition on the ratio between the preference parameter and the migration rate is necessary to observe equilibria which are not of type $(3)$ described above. In Fig.~\ref{eqnonAA_aa}, we show results assuming strict dominance between alleles.

\begin{figure}[h]
\centering
\includegraphics[width=8cm,height=3.5cm]{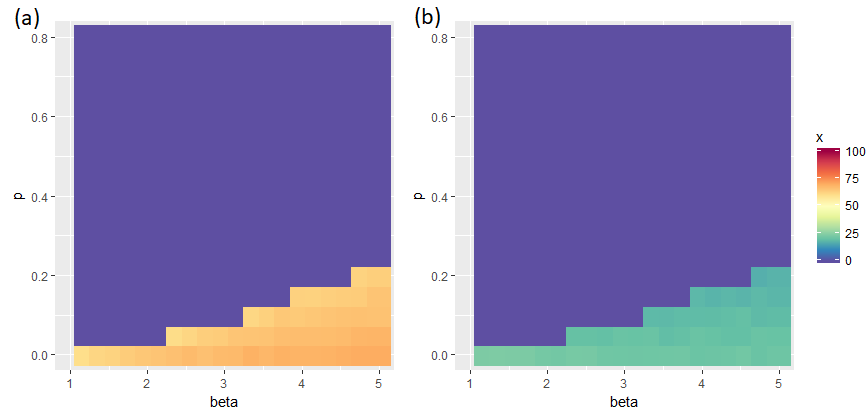}  \hfill   
     \caption{\label{figmigbasale} Conditions for the emergence of differentiated populations.
 Columns differ in initial conditions. First column: Asymmetrical
frequency of allele $a$ (more frequent in population 2); Second column:
Frequency of allele $a$ greater than 0.5 in both populations.
The colors indicate the percentage of simulations where the patch $1$ is filled with more than $90\%$ of allele $A$ in population 1 and the patch $2$ with more 
than $90\%$ of allele $a$. Note that the scales for $\beta$ and $p$ are really different from these of Fig.~\ref{fig6}}
\end{figure}

\begin{figure}[h]
\centering
\includegraphics[width=8cm,height=3.5cm]{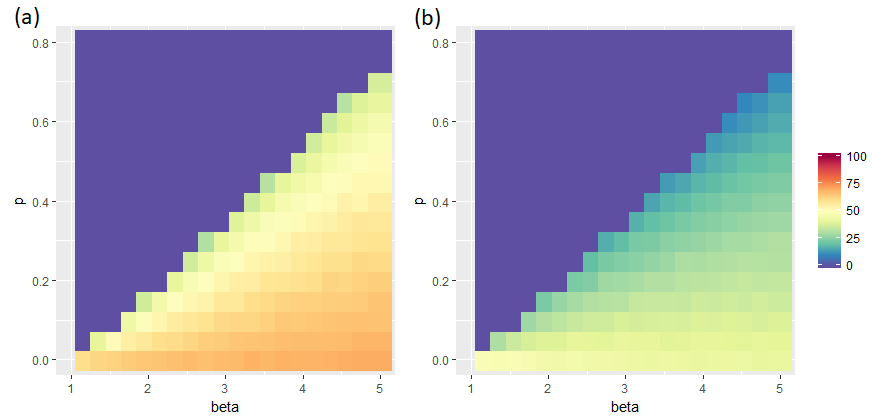}  \hfill   
     \caption{\label{eqnonAA_aa} Conditions for equilibria with all coordinates positive.
 Columns differ in initial conditions. First column: Asymmetrical
frequency of allele $a$ (more frequent in population 2); Second column:
Frequency of allele $a$ greater than 0.5 in both populations.
The colors indicate the percentage of simulations with equilibria which are not of type $(3)$.}
\end{figure}

Hence, the particular behaviours of these models with basal migration are observed only for large values of $\beta$ ($\beta$ larger than $3$ for $p=0.5$ 
for example), which contrasts substantially with our initial models. The particular migration behaviour assumed in our initial model, which depends on the 
local composition of potential mate, indeed substantially promotes population differentiation.

\section{Discussion}

\subsection{Effect of ploidy on population differentiation}
Altogether, these results show that simulations of diploid organisms assuming codominance between alleles depart from equilibria observed in the haploid model. 
Notably, migration limits the probability of population differentiation in the diploid 
model which was not observed in haploids. Populations differentiation tightly depends on initial conditions, because alleles with high frequencies are 
strongly advantaged by assortative preferences, favouring fixation of the predominant phenotype within populations. 
Because migration depends on the number of individuals with different phenotypes, population differentiation is quickly achieved in the haploid model because of a 
rapid decrease of migration as soon as differentiation starts, until a complete isolation of the two populations, which fixed different alleles. In the diploid model
however, when the two alleles are codominant, the presence of heterozygotes with intermediate phenotypes promotes migration even when populations are initially 
differentiated, and frequently leads to the fixation of a single allele throughout both connected populations. 
The discrepancy observed here between haploid and diploid assumptions highlights the need to consider the effect of ploidy in spatially-structured models of trait 
evolution, because the presence of an intermediate phenotype can interfere in the differentiation process.

\subsection{Effect of intermediate phenotypes' behaviour on polymorphism}

Depending on the assumption regarding codominance, the resulting equilibrium slightly differed.
The absence of preference of heterozygotes for their own phenotypes (hypothesis (COD2)) can promote polymorphism of the trait under sexual selection, notably when populations 
are initially uneven. These heterozygotes do not migrate and reproduce equally with any genotypes. When they initially occur in significant proportion within 
populations, they promote migration of homozygous genotypes and limits fixation of the initially predominant allele. This contrasts with the co-dominance drawn from hypothesis (COD1), where heterozygotes can migrate and are half less preferred by homozygotes. In this case, fixation of a single allele throughout both populations is  always achieved when the same allele is predominant in both populations at initial state.

Heterozygotes behaviour is therefore a key parameter in the dynamics of population differentiation in mating traits. However, empirical data on mating behaviour of intermediate 
phenotypes are scare. In \textit{Heliconius} butterflies, wing colour pattern is known to be an important visual cue for choosing mate. In the specific case of \textit{Heliconius heurippa}, 
which displays a red and white colour pattern, which can be obtained by crossing its 
red sister species \textit{H. melpomene} to the white sister species \textit{H. cydno}, assortative mate preferences have been demonstrated
(\cite{mavarez2006speciation}). This assortative 
mating has been hypothesized to favour the emergence of the species \textit{H. heurippa}, putatively created by hybridization events between 
\textit{H. melpomene} and \textit{H. cydno} (\cite{jiggins2008hybrid}). Similarly, the golden-crowned
manakin species (\textit{Lepidothrix vilasboasi}) has recently been demonstrated to have emerged from an hybrid speciation between the snow-capped 
(\textit{Lepidothrix nattereri}) and opal-crowned (\textit{Lepidothrix iris}) manakins of the Amazon basin, leading to an intermediate phenotype 
(\cite{barrera2017hybrid}). Female choice has been suggested to play a key role in the persistence of this new phenotype, driving the emergence of a new species.

We therefore hope that our predictions on the important impact of co-dominant heterozygotes behaviour on the evolution of mating trait differentiation will motivate further empirical research on mating and migration behaviour of intermediate phenotypes.

\subsection{Evolutionary consequences of dominance for population differentiation and speciation}
The fixation of the recessive allele $a$ when initially predominant in both populations is rarer when the alternative allele is dominant as compared to 
codominant. Dominant allele spreads 
among populations more easily than codominant one because of Haldane's sieve effect (\cite{haldane1927mathematical}). Assuming strict dominance, fixation of the dominant allele throughout both populations
is thus frequent, and favoured when predominant in both populations (data not shown). 

However, the invasion of the dominant haplotype throughout both populations is limited when initial populations display uneven proportions of the two alleles, leading to either (1) population differentiation or (2) persistent polymorphism within populations. 

(1) Dominance may reinforce population differentiation, because heterozygotes display the dominant phenotype and therefore rapidly increase the number of 
individuals 
carrying this phenotype within one population, therefore causing a decrease in migration between populations. Consequently, population differentiation is more 
frequently observed when one allele is dominant as compared to co-dominant, even when assuming a high migration rate.

(2) Depending on the initial distribution of heterozygotes among populations, polymorphism can also be maintained within a population where the dominant phenotype is frequent: heterozygotes then display the 
preferred phenotype and therefore do not suffer from mate rejection and scarcely migrates. This last result is in accordance with a recent paper (\cite{schneider2016diploid}), where 
the authors explore 
the consequences of dominance at loci involved in genetic incompatibilities on the dynamics of speciation in a spatially-explicit individually centered model. 
They observed that the distance between mates necessary for a spatial mosaic of species to emerge needed to be more restricted in model assuming diploidy with 
strict dominance as compared to haploid model. This highlights how dominance  may modulate spatial differentiation and emergence of well-separated species. 
Altogether this stresses out the need for diploid models of speciation, and should stimulate empirical comparisons of speciation dynamics driven by adaptive 
or sexually-selected traits displaying contrasted dominance relationships.

\section*{Aknowledgements}

This work  was funded by the Chair "Mod\'elisation Math\'ematique et Biodiversit\'e" of VEOLIA-Ecole Polytechnique-MNHN-F.X 
to HL and CS, the Young Researcher ANR DOMEVOL (ANR-13-JSV7-0003-01) and the Emergence program of Paris city council to VL.

 \appendix

\section{Definition of the models} \label{def_model}

In these Supplementary Materials, we will 
comment the models, address the question of fitnesses, and prove the results presented in the main text.\\

Recall that we denote the number of 
individuals of type $\alpha$ in the patch $i$ at time $t$ for any $\alpha\in \{AA,Aa,aa\}$, $i\in\{1,2\}$ and $t\geq 0$ by $z_{\alpha,i}(t)$. 
Moreover we denote the total number of individuals in the patch $i$ at time $t$ by 
$$N_i(t):=z_{AA,i}(t)+z_{Aa,i}(t)+z_{aa,i}(t).$$
Finally, we recall that the parameter 
$$
\zeta:=\frac{b\beta-d}{c}
$$
is the equilibrium size of a monomorphic $AA$ or $aa$ population. It will be a characteristic quantity in many equilibria for the three dynamical systems.

Recall that the models have been described in Section \ref{section_model}.
In the codominant case, we consider two different models, depending on the heterozygotes behaviour: 
\begin{enumerate}
 \item[(1)] Either the preference expressed by an individual is an 'average preference' of 
its alleles ($\beta$ for pairs $(AA,AA)$, $(aa,aa)$ and $(Aa,Aa)$, $(\beta+1)/2$ for pairs $(AA,Aa)$ and $(aa,Aa)$, and $1$ for pairs $(AA,aa)$)
and the migration rate follows the same rule,
 \item[(2)] Or heterozygotes express no preference and thus do not migrate (in this case only the homozygotes express a 
 preference towards individuals of the same genotype).
The preference parameter is thus $\beta$ for pairs $(AA,AA)$ and $(aa,aa)$, and $1$ for the other pairs.
 \end{enumerate}
We get for the first model (COD1):
\begin{equation}
\label{systcompletcodom}
\left\{
 \begin{aligned}
 \dot z_{AA,i}&=\frac{b }{N_i}\left( \beta z_{AA,i}^2+\frac{\beta+1}{2}z_{AA,i}z_{Aa,i}+\frac{\beta}{4} z_{Aa,i}^2 \right) -(d+cN_i)z_{AA,i}\\
 & \qquad   \qquad  -p\frac{z_{aa,i}+z_{Aa,i}/2}{N_i}z_{AA,i}+p \frac{z_{aa,j}+z_{Aa,j}/2}{N_j}z_{AA,j}\\
 \dot z_{Aa,i}&=\frac{b}{N_i}\left( \frac{\beta}{2} z_{Aa,i}^2+\frac{\beta+1}{2}z_{Aa,i}(z_{AA,i}+z_{aa,i})+2 z_{AA,i}z_{aa,i} \right)\\
&-(d+cN_i)z_{Aa,i} -p\frac{z_{aa,i}+z_{AA,i}}{2N_i}z_{Aa,i}+p \frac{z_{aa,j}+z_{AA,j}}{2N_j}z_{Aa,j}\\
\dot z_{aa,i}&=\frac{b }{N_i}\left( \beta z_{aa,i}^2+\frac{\beta+1}{2}z_{aa,i}z_{Aa,i}+\frac{\beta}{4} z_{Aa,i}^2 \right)-(d+cN_i)z_{aa,i}\\
&\qquad   \qquad  -p\frac{z_{AA,i}+z_{Aa,i}/2}{N_i}z_{aa,i}+p \frac{z_{AA,j}+z_{Aa,j}/2}{N_j}z_{aa,j}
\end{aligned}
\right. .
\end{equation}
The second codominant model writes
\begin{equation}
\label{systcompletcodom2}
\left\{
 \begin{aligned}
 \dot{z}_{AA,i}&= \frac{b}{N_i}\left(\beta z_{AA,i}^2+\frac{z_{Aa,i}^2}{4}+z_{Aa,i} z_{AA,i} \right)- (d+c N_i)z_{AA,i}\\
& \qquad \qquad \qquad -p \frac{ 
z_{aa,i}+z_{Aa,i}}{N_i} z_{AA,i}+p \frac{z_{aa,j}+z_{Aa,j}}{N_j} z_{AA,j}\\
\dot{z}_{Aa,i}&=\frac{b}{N_i}\left((z_{AA,i} + z_{aa,i})z_{Aa,i} + \frac{z_{Aa,i}^2}{2} + 2 z_{AA,i}z_{aa,i}\right) - (d + cN_i)z_{Aa,i}\\
\dot{z}_{aa,i}&= \frac{b}{N_i}\left(\beta z_{aa,i}^2+\frac{z_{Aa,i}^2}{4}+z_{Aa,i} z_{aa,i} \right)-(d+c N_i)z_{aa,i} \\
& \qquad \qquad \qquad-p \frac{ 
z_{AA,i}+z_{Aa,i}}{N_i} z_{aa,i}+p \frac{z_{AA,j}+z_{Aa,j}}{N_j} z_{aa,j}.
\end{aligned}
\right.
\end{equation}

In the dominant case, only individuals $aa$ express the phenotype $a$. Other individuals are of phenotype $A$.
The population dynamics in this case writes:
\begin{equation}
\label{systcompletdom}
\left\{
 \begin{aligned}
 \dot z_{AA,i}&=\frac{b\beta }{N_i}\left( z_{AA,i}+\frac{1}{2}z_{Aa,i}\right)^2 -(d+cN_i)z_{AA,i} -p\frac{z_{aa,i}}{N_i}z_{AA,i}\\
 &\qquad +p \frac{z_{aa,j}}{N_j}z_{AA,j}\\
 \dot z_{Aa,i}&=\frac{b}{N_i}( \beta z_{Aa,i}+2z_{aa,i})\left(z_{AA,i}+\frac{z_{Aa,i}}{2} \right)-(d+cN_i)z_{Aa,i}\\
&\qquad  \qquad \qquad  \qquad  \qquad -p\frac{z_{aa,i}}{N_i}z_{Aa,i}+p \frac{z_{aa,j}}{N_j}z_{Aa,j}\\
\dot z_{aa,i}&=\frac{b }{N_i}\left( \beta z_{aa,i}^2+z_{aa,i}z_{Aa,i}+\frac{\beta}{4} z_{Aa,i}^2 \right)-(d+cN_i)z_{aa,i}\\
& \qquad  \qquad -p\frac{z_{AA,i}+z_{Aa,i}}{N_i}z_{aa,i}+p \frac{z_{AA,j}+z_{Aa,j}}{N_j}z_{aa,j},
\end{aligned}
\right.
\end{equation}

Notice that the dynamical systems governing the population dynamics in the co-dominant and the dominant cases 
can be obtained as large population limits of stochastic individual based models (see \cite{coron2016stochastic}).

Precisely, let $K$ be a large parameter that gives the order size of the population. The microscopic population is represented by the process 
$(Z^K_{\alpha,i}(t))_{\alpha \in\{AA,Aa,aa\},i\in \{1,2\}}$ where $Z^K_{\alpha,i}(t)$ gives the size, divided by $K$, of the $\alpha$-population in patch $i$ at time $t$. The reproduction and migration mechanisms are described in Section~\ref{section_model} and the death rate of any individual in 
patch $i$ is given by
$$
d+c\left(Z^K_{AA,i}(t)+Z^K_{Aa,i}(t)+Z^K_{aa,i}(t)\right).
$$
Under the assumption that the sequence of initial conditions 
$\big((Z^K_{\alpha,i}(0))_{\alpha \in\{AA,Aa,aa\},i\in \{1,2\}}\big)_{K\in \N}$ converges (in probability) when $K$ goes to infinity, the sequence of stochastic 
functions $\big((Z^K_{\alpha,i}(t))_{\alpha \in\{AA,Aa,aa\},i\in \{1,2\}}, t\in [0,T]\big)_{K\in \N}$ also converges (in probability for the uniform convergence) to 
the trajectory of the deterministic models presented above. 
In Fig.~\ref{aleaVSdetcodom1} and~\ref{aleaVSdetcodom2}, we present a realisation of the trajectory of $Z^K_{aa,1}$ for different values of $K$ and under the 
parameters values presented in Fig.~\ref{fig0dyn} and \ref{fig1dyn}.
Notice that $K=10000$ is enough for the limiting deterministic model to be a very good approximation of the stochastic process.
\begin{figure}[h]
\begin{center}
\includegraphics[width=8cm,height=3.5cm]{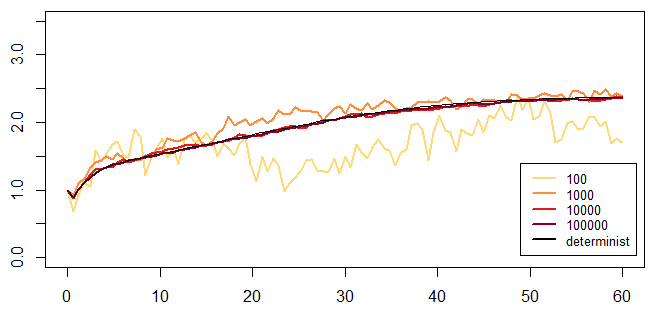}
\end{center}
\caption{\label{aleaVSdetcodom1} Comparison of the probabilistic and the deterministic COD1 models. 
Colored curves: realisations of the trajectory $(Z^K_{aa,1}(t),t\geq 0)$ for five different values of $K$ given in the legend; black curve: deterministic trajectory. The parameters are the ones of Fig.~\ref{fig0dyn}. }
\end{figure}

\begin{figure}[h]
\begin{center}
\includegraphics[width=8cm,height=3.5cm]{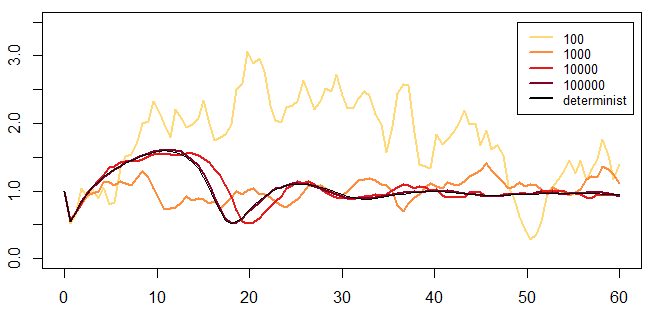}
\end{center}
\caption{\label{aleaVSdetcodom2} Comparison of the probabilistic and the deterministic COD2 models. 
Colored curves: realisations of the trajectory $(Z^K_{aa,1}(t),t\geq 0)$ for five different values of $K$ given in the legend; black curve: deterministic trajectory. The parameters are the ones of Fig.~\ref{fig1dyn}.}
\end{figure}

The mechanism of mating preference presented in (\cite{coron2016stochastic}) is similar to the ones classically used in ecology literature (see for instance 
(\cite{gavrilets1998evolution,matessi2002long,BurgerSchneider2006,Servedio2010}) and references therein).
The non usual form of the equations comes from the fact that we model a varying size population evolving in continuous time and with overlapping generations, 
whereas classical models consider discrete non overlapping generations models with an infinite population size 
(see Section 'Discussion of the model' in (\cite{coron2016stochastic}) for a detailed comparison).

As in \cite{coron2016stochastic} and in order to compare both continuous and discrete models, we give the probabilities that the individuals with genotype $\mathfrak{g}$ mate with any individual with genotype $\mathfrak{g}'$ in the deme $i$ at time $t$:
\begin{equation*}
\frac{p_\beta(\mathfrak{g},\mathfrak{g}')z_{\mathfrak{g},i}^2}{N_i\big(p_\beta(\mathfrak{g},AA)z_{AA,i}+p_\beta(\mathfrak{g},Aa)z_{Aa,i}+p_\beta(\mathfrak{g},aa)z_{aa,i}\big)}.
\end{equation*}

Finally, to ensure the survival of the population and that the total population size remains bounded, we make the following assumptions: 
$$ b>d>0 \quad \text{and} \quad c>0. $$
We recall that we are interested in the case of assortative mating, which means that 
$$ \beta \geq 1. $$

\section{On fitness proxies}

The question of defining a fitness proxy for diploid individuals with a density dependent sexual reproduction is tricky.
Fitnesses are mostly defined in the case of haploid individuals with clonal reproduction. It thus essentially consists in computing the 
exponential growth rate of a certain type of individuals (for instance mutants) in a resident population.
It can also be computed for individuals in a population with more than two types. Hence it may be necessary to take into account the current population sizes.

A first solution in our case could be to compute the relative rates at wich $AA$ and $aa$ individuals take part in an event of reproduction, as 
an $AA$ (resp. $aa$) individual necessarily transmits an allele $A$ (resp. $a$). However, such a fitness proxy would not allow to take properly into 
account the role of the heterozygote individuals, especially when one of the alleles is dominant.

Very few fitness proxies to the diploid case have been available so far (but see 
\cite{roze2005inbreeding,ravigne2006selective,parvinen2008novel,parvinen2016fitness}). 
These papers essentially deal with meta population and dispersers.
We will borrow some idea of (\cite{parvinen2008novel}) and of branching process theory 
(see \cite{athreya1972branching} for instance) to propose a fitness proxy in our case.

In (\cite{parvinen2008novel}), the authors devise a fitness proxy for a set of metapopulation 
models defined in continuous time. It corresponds to the expected number of mutant dispersers produced by a local mutant population initiated by one 
mutant disperser and is computed via the principal eigenvalue of a matrix taking into account the fact that a mutant allele may be in a homozygote or an 
heterozygote mutant.

Our case is more involved however, because the birth rates are type and density dependent.
Our strategy is thus the following: we consider an environment set by the current population sizes in a patch, construct a multitype branching process 
with this environment, and define relative fitnesses as the long term proportions of individuals. It has to be understood that such fitnesses are 
relative. We aim at finding which genotype is locally favored in a patch among $AA$, $Aa$, and $aa$, and not giving a quantitative fitness value to each 
genotype.

To construct this fitness proxy, let us first recall that an interpretation of the birth rates is that half of the time individuals reproduce as a female 
(they choose their mate according to their preference), and half of the time as a male (they are choosen by the female).
For a fixed value of population sizes in a patch $(z_{AA},z_{Aa},z_{aa})$ we may thus compute the rate at which an individual with a given genotype $\sigma$
gives birth to an individual of a given genotype $\tau$, $b_{\sigma \tau}$.

In our model, these rates have the following expressions (where $z_{\alpha \alpha'}, (\alpha,\alpha') \in \{A,a\}^2$ is the $\alpha \alpha'$-population 
size in the patch considered, $\bar{\alpha}$ is the complementary of $\alpha$ in $\{A,a\}$, and $N=z_{AA}+z_{Aa}+z_{aa}$):
$$
 b_{\alpha \alpha\to \alpha \alpha}= \frac{b}{N} \beta z_{\alpha \alpha} , \quad  b_{\alpha \alpha \to \bar{\alpha}\bar{\alpha}}= 0,
$$
$$
 b_{\alpha \alpha \to \alpha\bar{\alpha}}=  \frac{b}{N} \left(p_\beta(\alpha\alpha,\alpha \bar{\alpha}) \frac{z_{\alpha \bar{\alpha}}}{2}+
 p_\beta(\alpha \alpha,\bar{\alpha}\bar{\alpha})z_{\bar{\alpha}\bar{\alpha}} \right),
$$
$$
 b_{\alpha \bar{\alpha} \to \alpha\alpha}=  \frac{b}{N} \left(p_\beta(\alpha\bar{\alpha},\alpha \bar{\alpha}) \frac{z_{\alpha \bar{\alpha}}}{4}+
 p_\beta(\alpha \bar{\alpha},\alpha\alpha)\frac{z_{\alpha\alpha}}{2} \right),
$$
$$
 b_{\alpha \bar{\alpha} \to \bar{\alpha}\bar{\alpha}}=  \frac{b}{N} \left(p_\beta(\alpha\bar{\alpha},\alpha \bar{\alpha}) \frac{z_{\alpha \bar{\alpha}}}{4}+
 p_\beta(\alpha \bar{\alpha},\bar{\alpha}\bar{\alpha})\frac{z_{\bar{\alpha}\bar{\alpha}}}{2} \right),
$$
$$
 b_{\alpha \bar{\alpha} \to \alpha\bar{\alpha}}=  \frac{b}{N} \left(p_\beta(\alpha\bar{\alpha},\alpha \bar{\alpha}) \frac{z_{\alpha \bar{\alpha}}}{2}+
 p_\beta(\alpha \bar{\alpha},\alpha\alpha)\frac{z_{\alpha\alpha}}{2} + p_\beta(\alpha \bar{\alpha},\bar{\alpha}\bar{\alpha})\frac{z_{\bar{\alpha}\bar{\alpha}}}{2} \right).
$$
These rates can be seen as the birth rates of a continuous time multitype branching process. 
If we introduce the matrix
$$M:= \small{ \left( \begin{array}{ccc}
     b_{AA\to AA} & b_{AA\to Aa} & b_{AA\to aa} \\
 b_{Aa\to AA} & b_{Aa\to Aa} & b_{Aa\to aa} \\
  b_{aa\to AA} & b_{aa\to Aa} & b_{aa\to aa}
   \end{array}\right)},
 $$
We know that this matrix has a positive maximal eigenvalue $\lambda$, a left and a right eigenvectors $u$ and $v$ associated to $\lambda$ 
with positive coordinates such that 
$$ u.v=1 \quad \text{and} \quad u.1=1. $$
Then the total population size of the branching process grows exponentially with a Malthusian parameter $\lambda$ and the proportions of the populations
$(z_{AA},z_{Aa},z_{aa})$ converge to $u$.
These proportions could be a proxy for the fitnesses of the different genotypes in a patch at a given moment.

\section{Behaviour of the system without migration}

Let us first study the behaviour of the system when there is no migration. 
In this case the two patches have independent dynamics, and it is enough to study 
one patch.
\subsection{First codominant case} \label{cod1_sans_mig}
\label{ssec_COD1}
We will prove the following result:

\begin{lem}\label{lemmecod}
 If $p=0$, there are two stable and one unstable fixed points in the patch $i$ for $i \in \{1,2\}$:
 $$ (\zeta,0,0), \quad (0,0,\zeta) \quad \text{and} \quad  (\delta \xi,\xi, \delta \xi), $$
 where
$\delta$ is the unique positive root of the polynomial functional 
$$ P(X):=X^3+\frac{1}{2}X^2-\frac{1}{4}X-\frac{\beta}{8},$$
and
$$
\xi:=\frac{b(\delta+\beta/2)-d}{(2\delta+1)c}.
$$
Moreover, we have the following asymptotic behaviours for the dynamical system \eqref{systcompletcodom}:
\begin{itemize}
 \item If $z_{AA,i}(0)>z_{aa,i}(0)$, then the solution converges to the stable fixed point $(\zeta,0,0)$
 \item If $z_{AA,i}(0)=z_{aa,i}(0)$, then the solution converges to the unstable fixed point $(\delta \xi,\xi, \delta \xi)$.
 \item If $z_{AA,i}(0)<z_{aa,i}(0)$, then the solution converges to the stable fixed point $(0,0,\zeta)$
 \end{itemize}
 \end{lem}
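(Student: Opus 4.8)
The overall strategy is to find the equilibria, settle their local stability by linearisation, and then obtain the three convergence statements from a single algebraic identity that makes the diagonal $\{z_{AA}=z_{aa}\}$ invariant and splits the phase space into the two basins of attraction.

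First I would verify the equilibria. The monomorphic points $(\zeta,0,0)$ and $(0,0,\zeta)$ follow by direct substitution together with $c\zeta=b\beta-d$. For the third I would use the ansatz $z_{AA}=z_{aa}=x$, $z_{Aa}=y$ (legitimate because the system is invariant under exchanging $AA$ and $aa$), write $\delta=x/y$, and equate the ratio of the two birth polynomials of $\dot z_{AA}$ and $\dot z_{Aa}$ to $x/y$; after clearing $y^3$ this collapses exactly to $P(\delta)=0$. Descartes' rule of signs applied to the coefficient pattern $(+,+,-,-)$ gives a unique positive root, and $P(1/2)=(1-\beta)/8\le 0$ shows $\delta\ge 1/2$. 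Feeding the root into the relation $\dot z_{Aa}=0$ and solving the resulting affine equation for $y$ produces $\xi$ in the stated form; $\delta\ge 1/2$ and $\beta\ge 1$ then force $\xi>0$. Dividing each coordinate equation by its variable shows no further interior equilibrium exists.

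Second, for stability I would linearise. At $(\zeta,0,0)$ the Jacobian, in the order $(z_{AA},z_{Aa},z_{aa})$, is upper triangular with diagonal entries $-c\zeta$, $b(1-\beta)/2$ and $-b\beta$, all negative when $\beta>1$, so the point is locally asymptotically stable and $(0,0,\zeta)$ is handled by symmetry; the borderline $\beta=1$ makes the middle eigenvalue vanish and is presumably where the Local Center Manifold Theorem is invoked. For the interior point I would exploit the identity obtained by subtracting the $aa$ from the $AA$ equation, $\frac{d}{dt}(z_{AA}-z_{aa})=(z_{AA}-z_{aa})\,g$ with $g=\frac{b}{N}(\beta(z_{AA}+z_{aa})+\frac{\beta+1}{2}z_{Aa})-(d+cN)$; evaluating $g$ at $(\delta\xi,\xi,\delta\xi)$ and simplifying with $P(\delta)=0$ reduces it to $\frac{b(2\delta-1)(8\delta^3+8\delta^2-1)}{2(2\delta+1)}>0$, a transverse unstable direction, so this point is a saddle.

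Third, the same identity does the global work: it shows the sign of $z_{AA}-z_{aa}$ is preserved and that the diagonal is invariant, which is precisely the trichotomy in the statement. On the diagonal I would set $r=z_{AA}/z_{Aa}$; the $(d+cN)$ terms cancel and a computation yields the scalar autonomous law $\dot r=\frac{-2b}{2r+1}P(r)$, so $r\to\delta$ monotonically, after which the total population obeys an asymptotically logistic equation whose carrying capacity equals (again using $P(\delta)=0$) exactly $(2\delta+1)\xi$, giving convergence to $(\delta\xi,\xi,\delta\xi)$. The remaining case $z_{AA}(0)>z_{aa}(0)$ is the hard one. Here I would pass to ratios $q=z_{aa}/z_{AA}$, $m=z_{Aa}/z_{AA}$, in which the dynamics become genuinely two-dimensional and autonomous, e.g. $\dot q=\frac{b\beta(1-q)}{1+q+m}(\frac{m^2}{4}-q)$; the half-space becomes the invariant strip $\{q<1\}$, the target $(\zeta,0,0)$ becomes the fixed point $(0,0)$ which linearises to a stable node, and the saddle sits on the boundary $q=1$. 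Boundedness and positivity of the total size keep trajectories in a compact region away from the origin, so Poincar\'e--Bendixson applies and everything reduces to excluding periodic orbits; this is the main obstacle, since the obvious Dulac multiplier $1+q+m$ is not sign-definite up to $q=1$. I expect the cure to be a Lyapunov argument based on the heterozygote-deficit $W=4z_{AA}z_{aa}-z_{Aa}^2$, which obeys $\dot W=(b\beta-2(d+cN))W+\frac{b(\beta-1)(z_{AA}+z_{aa})z_{Aa}^2}{N}$ with nonnegative forcing, so that $\{W\ge 0\}$ is forward invariant and there $\frac{m^2}{4}-q=-W/(4z_{AA}^2)\le 0$ makes $q$ nonincreasing; controlling the transient sign of $W$ and combining this monotonicity with the node at the origin should rule out recurrence and finish the proof.
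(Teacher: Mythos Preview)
Your treatment of the equilibria, their linear stability, and the diagonal case $z_{AA}=z_{aa}$ is essentially the paper's argument: the same subtraction identity, the same cubic $P$, and your scalar law $\dot r=-\tfrac{2b}{2r+1}P(r)$ is exactly the content of the paper's Lyapunov computation for $W_1=\ln|z_{AA}/(\delta z_{Aa})-1|$. One small gap: the ansatz $z_{AA}=z_{aa}$ is not justified by symmetry alone. The paper first derives the factorisation $(z_{AA}-z_{aa})(z_{AA}z_{aa}-z_{Aa}^2/4)=0$ from the first and third stationary equations and then rules out the branch $4z_{AA}z_{aa}=z_{Aa}^2$ by a direct contradiction; you should add this step.

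The real gap is the off-diagonal case $z_{AA}(0)>z_{aa}(0)$. Your reduction to the autonomous planar $(q,m)$ system is correct, and the formula $\dot q=\tfrac{b\beta(1-q)}{1+q+m}(m^2/4-q)$ is right, but the proposed closure via $W=4z_{AA}z_{aa}-z_{Aa}^2$ is incomplete: forward invariance of $\{W\ge 0\}$ does not help when $W(0)<0$, and your own formula $\dot W=(b\beta-2(d+cN))W+\text{nonneg.}$ shows $W$ can stay negative for a long time if $N$ is small, so the monotonicity of $q$ need not kick in. You end with ``should rule out recurrence'', which is precisely the missing step. The paper bypasses all of this with a single global Lyapunov function on $\{z_{AA}>z_{aa}\}$:
\[
W_2(\mathbf z)=\ln\!\left(\frac{z_{AA}+z_{aa}+\beta z_{Aa}}{z_{AA}-z_{aa}}\right),
\qquad
\dot W_2=-\frac{b(\beta-1)\,\beta z_{Aa}(z_{AA}+z_{aa})}{2N\,(z_{AA}+z_{aa}+\beta z_{Aa})}\le 0,
\]
with equality only on $\{z_{Aa}=0\}$. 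LaSalle's invariance principle then forces the $\omega$-limit set into $\{z_{Aa}=z_{aa}=0\}$, where the only equilibrium is $(\zeta,0,0)$. This is the key idea you are missing; with it, no planar reduction, Dulac function, or control of the sign of $W$ is needed.
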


\begin{proof}

To begin with, we describe the different stable fixed points in the codominant case.
We recall that the dynamics of the system in one patch is given by the system of equations:
\begin{equation}
\label{sys1patch0}
\left\{
 \begin{aligned}
& \dot z_{AA}=\frac{b }{N}\left( \beta z_{AA}^2+\frac{\beta+1}{2}z_{AA}z_{Aa}+\frac{\beta}{4} z_{Aa}^2 \right) -(d+cN)z_{AA}\\
& \dot z_{Aa}=\frac{b}{N}\left( \frac{\beta}{2} z_{Aa}^2+\frac{\beta+1}{2}z_{Aa}(z_{AA}+z_{aa})+2 z_{AA}z_{aa} \right)-(d+cN)z_{Aa}\\
& \dot z_{aa}=\frac{b }{N}\left( \beta z_{aa}^2+\frac{\beta+1}{2}z_{aa}z_{Aa}+\frac{\beta}{4} z_{Aa}^2 \right)-(d+cN)z_{aa},
 \end{aligned}
\right.
\end{equation}
where $N=z_{AA}+z_{Aa}+z_{aa}$.

The fixed points $(z_{AA},z_{Aa},z_{aa})$ are solutions to the following system of equations:
\begin{equation}
\label{sys1patch}
\left\{
 \begin{aligned}
&\frac{b }{N}\left( \beta z_{AA}^2+\frac{\beta+1}{2}z_{AA}z_{Aa}+\frac{\beta}{4} z_{Aa}^2 \right) =(d+cN)z_{AA}\\
&\frac{b}{N}\left( \frac{\beta}{2} z_{Aa}^2+\frac{\beta+1}{2}z_{Aa}(z_{AA}+z_{aa})+2 z_{AA,1}z_{aa} \right)=(d+cN)z_{Aa}\\
&\frac{b }{N}\left( \beta z_{aa}^2+\frac{\beta+1}{2}z_{aa}z_{Aa}+\frac{\beta}{4} z_{Aa}^2 \right)=(d+cN)z_{aa}.\\
 \end{aligned}
\right.
\end{equation}

\subsection*{Monomorphic equilibria}

We first check that if $z_{\alpha \alpha}=0$ for $\alpha \in \{A,a\}$, then $z_{Aa}$ is necessarily equal to $0$. Hence we obtain the two following 
monomorphic fixed points:
$$ \left( \zeta,0,0 \right) \quad \text{and} \quad \left(0,0, \zeta\right). $$
The eigenvalues for these fixed points are
$$ \left(-b\beta,-\frac{b}{2}(\beta-1),-b\beta+d \right) .$$
They are all negative under our assumptions.

If $z_{Aa}>0$ then from the first and the last equations in \eqref{sys1patch} we see that necessarily $z_{AA}>0$ and $z_{aa}>0$.
Hence we look for a fixed point with the three coordinates positive.
From the first and the last equations in \eqref{sys1patch} we get that
\begin{multline*}
(d+cN)z_{AA}z_{aa}= z_{aa}(\beta z_{AA}^2+\frac{\beta+1}{2}z_{AA}z_{Aa}+\frac{\beta}{4}z_{Aa}^2)\\
=z_{AA}(\beta z_{aa}^2+\frac{\beta+1}{2}z_{AA}z_{Aa}+\frac{\beta}{4}z_{Aa}^2),
\end{multline*}
which yields
\begin{equation}
\label{eq_prodnul}
 (z_{AA}-z_{aa})(z_{AA}z_{aa}-\frac{z_{Aa}^2}{4})=0.
\end{equation}
Hence, either $4z_{AA}z_{aa}=z_{Aa}^2$ or $z_{aa}=z_{AA}$.

\subsection*{Case $4z_{AA}z_{aa}=z_{Aa}^2$}

By expressing in two different ways $(d+cN)N$ thanks to the second and last equations in \eqref{sys1patch} and replacing $z_{AA}$ by 
$z_{Aa}^2/4z_{aa}$, we get
\begin{multline*}
 \frac{1}{z_{aa}}\left(\beta z_{aa}^2+\frac{\beta+1}{2}z_{aa}z_{Aa}+\frac{\beta}{4}z_{Aa}^2\right) \\ 
 = \frac{1}{z_{Aa}}\left( \frac{\beta}{2}z_{Aa}^2+\frac{\beta+1}{2}z_{Aa}\left(\frac{z_{Aa}^2}{4z_{aa}}+z_{aa}\right)+\frac{z_{Aa}^2}{2}\right).
\end{multline*}
This implies
\begin{equation*}
 \frac{\beta-1}{2}\left( z_{aa}+\frac{z_{Aa}^2}{4z_{aa}} \right)=0,
\end{equation*}
and contradicts the fact that all the fixed point coordinates are positive.

\subsection*{Case $z_{aa}=z_{AA}$}

The fixed point is thus of the form $(x,y,x)$, with $x>0$ and $y>0$.
Equalizing the first equation of \eqref{sys1patch} divided by $x$ and the second one divided by $y$, we find
\begin{equation*}
\begin{aligned}
\left( \beta x+\frac{\beta+1}{2}y+\frac{\beta}{4}\frac{y^2}{x}\right)=\left(\frac{\beta}{2}y+(\beta+1)x+2\frac{x^2}{y}\right)\\
\Leftrightarrow \frac{x^3}{y^3}+\frac{1}{2}\frac{x^2}{y^2}-\frac{1}{4}\frac{x}{y}-\frac{\beta}{8}=0.
\end{aligned}
\end{equation*} 
 Let us set $\delta:=\frac{x}{y}>0$.  
 The polynomial function $P(X)= X^3+\frac{1}{2}X^2-\frac{1}{4}X-\frac{\beta}{8}$, has only one positive root and this latter belongs to $]1/2,+\infty[$.
 Indeed by taking the first derivative of $P$ we can check that $P(X)$ is increasing until $x=-1/2$, then decreasing until $X=1/6$ and then 
 increasing again. As $P(-1/2)=(1-\beta)/8<0$, we conclude that there is only one positive root.
 Finally we can check that $P(1/2)<0$. 
 Then using the fact that $x=\delta y$ and the second equation of \eqref{sys1patch}, we conclude
 that there exists only one equilibrium with positive coordinates which is
$$
(\delta \xi,\xi, \delta \xi) \quad \text{where} \quad \xi:=\frac{b(\delta+\beta/2)-d}{(2\delta+1)c}, 
$$
and $\delta$ is the unique positive solution of  $P(X)=0$.

Using again that $P(\delta)=0$, we can write the eigenvalues of the Jacobian Matrix at $(\delta \xi,\xi,\delta \xi)$ as
 $$ \left( \frac{b\beta}{4\delta}(2\delta-1),
 \lambda_2,\lambda_3 \right), $$
where $\lambda_2$ and $\lambda_3$ are complex numbers. 
As $\delta \in ]1/2,+\infty[$,
this ensures that the fixed point $(\delta \xi ,\xi,\delta \xi)$ is unstable.\\
We now prove that its stable manifold is of dimension $2$ and is exactly the set 
$$\mathcal{A}^= =\{{\bf{z}}\in (\R_+^*)^3, z_{AA}=z_{aa}\}.$$
By subtracting the third equation of \eqref{sys1patch0} to the first one, we get
\begin{equation}
\label{eqzAA-zaa}
\dot z_{AA}-\dot z_{aa}= (z_{AA}- z_{aa})\left( b\beta -d-cN - b \frac{\beta-1}{2} \frac{z_{Aa}}{N} \right), 
\end{equation}
and we deduce that the set $\mathcal{A}^=$ is invariant under the flow defined by \eqref{sys1patch0}. 
Thus, let us assume that the initial condition belongs to $\mathcal{A}^=$. Hence for any $t\geq 0$, $z_{AA}(t)=z_{aa}(t)$. Moreover,
\begin{equation*}
\frac{d}{dt}\left(\frac{z_{AA}(t)}{\delta z_{Aa(t)}}-1\right)=-\frac{2b}{\delta z_{Aa}^2 N}\left(z_{AA}^3+\frac{z_{Aa}}{2}z_{AA}^2-\frac{z_{Aa}^2}{4}z_{AA}-\frac{\beta z_{Aa}^3}{8}\right).
\end{equation*}
Since $P(\delta)=0$, $\delta z_{Aa}$ is a root of the polynomial function with respect to the variable $z_{AA}$ of the right hand side. Thus
\begin{multline}
\label{eqzAAsurzAa}
\frac{d}{dt}\left(\frac{z_{AA}(t)}{\delta z_{Aa(t)}}-1\right)\\
=-\frac{2b}{ z_{Aa} N}\left(\frac{z_{AA}}{\delta z_{Aa}}-1\right)\left(z_{AA}^2+\left(\frac{1}{2}+\delta\right)z_{Aa}z_{AA}+\frac{\beta z_{Aa}^2}{8\delta}\right).
\end{multline}
The polynomial function of degree two of the r.h.s in \eqref{eqzAAsurzAa} is non-negative. This ensures that the function 
$$W_1({\bf{z}})=\ln \left(\left|\frac{z_{AA}}{\delta z_{Aa}}-1\right|\right)
$$
is a Lyapounov function for the dynamical system \eqref{sys1patch0} restricted on $\mathcal{A}^=$. Theorem 1 in \cite{lasalle1960some} implies that the 
flow ${\bf{z}}(t)$ converges to $\{(\delta\xi,\xi,\delta\xi)\}$, the largest invariant set included in $\{z_{AA}=\delta z_{Aa}\}\cap \mathcal{A}^=$. 
This implies that $\mathcal{A}^=$ is included in the stable manifold of $(\delta\xi,\xi,\delta\xi)$.\\
 
 Let us now deal with the solution outside $\mathcal{A}^=$. First note that by \eqref{eqzAA-zaa}, the two sets
$$
\mathcal{A}^> = \{{\bf{z}}\in (\R_+^*)^3, z_{AA}>z_{aa}\} \quad \text{and} \quad \mathcal{A}^< = \{{\bf{z}}\in (\R_+^*)^3, z_{AA}<z_{aa}\}
$$
are two invariant sets under the dynamical system \eqref{sys1patch0}. Since the system \eqref{sys1patch0} is symetric with respect to $z_{AA}$ and $z_{aa}$, 
we only have to deal with one of the previous sets, the dynamics in the other one being symmetric. In what follows, we study the dynamics in the set $\mathcal{A}^>$.\\
On the set $\mathcal{A}^>$, the Lyapunov function
$$
W_2({\bf{z}})=\ln \left( \frac{z_{AA}+z_{aa}+\beta z_{Aa}}{z_{AA}-z_{aa}}\right)
$$
is well defined and is clearly non-negative. Moreover, its derivative is
$$
\frac{d}{dt}W_2({\bf{z}}(t))=-\frac{b(\beta-1)}{2N(z_{AA}+z_{aa}+\beta z_{Aa})}[\beta z_{Aa}(z_{AA}+z_{aa})]\leq 0,
$$
and this derivative is equal to $0$ on the set $\mathcal{O}^>=\{{\bf{z}}\in \mathcal{A}^>, z_{Aa}=0\}$. The largest invariant set with respect to \eqref{sys1patch0} including in $\mathcal{O}^>$ is $\{{\bf{z}}\in \mathcal{A}^>, z_{Aa}=z_{aa}=0\}$ and it is obvious that any trajectory starting from this invariant set converges to $(\zeta,0,0)$. Using Theorem 1 in \cite{lasalle1960some}, it is sufficient to conclude that any trajectory starting from $\mathcal{A}^>$ converges to $(\zeta,0,0)$.
\end{proof}

\subsection{Second codominant case} \label{cod2_sans_mig}
 Using the same ideas and the same proofs as in the previous section \ref{ssec_COD1},
we prove the following lemma for the codominant model when heterozygotes are not preferred and do not migrate:
\begin{lem}
 If $p=0$, there are two stable and one unstable fixed points in the patch $i$ for $i \in \{1,2\}$:
 $$ (\zeta,0,0), \quad (0,0,\zeta) \quad \text{and} \quad  (\delta' \xi',\xi', \delta' \xi'), $$
 where
$\delta'$ is the unique positive root of the polynomial functional 
$$ Q(X):=X^3+\frac{2-\beta}{2}X^2-\frac{1}{4}X-\frac{1}{8},$$
and
$$
\xi':=\frac{b(\delta'+1/2)-d}{(2\delta'+1)c}.
$$
Moreover, we have the following asymptotic behaviours for the dynamical system \eqref{systcompletcodom}:
\begin{itemize}
 \item If $z_{AA,i}(0)>z_{aa,i}(0)$, then the solution converges to the stable fixed point $(\zeta,0,0)$
 \item If $z_{AA,i}(0)=z_{aa,i}(0)$, then the solution converges to the unstable fixed point $(\delta' \xi',\xi', \delta' \xi')$.
 \item If $z_{AA,i}(0)<z_{aa,i}(0)$, then the solution converges to the stable fixed point $(0,0,\zeta)$
 \end{itemize}
\end{lem}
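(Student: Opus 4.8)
The plan is to follow closely the proof of Lemma~\ref{lemmecod}, since the COD2 system \eqref{systcompletcodom2} (taken with $p=0$ on a single patch) differs from the COD1 system only through the heterozygote coefficients; the real work is to check that the cancellations which made that proof succeed survive the replacement of the weights $(\beta+1)/2,\beta$ by $1,\tfrac12$. I would first set every right-hand side to zero. As before, $z_{\alpha\alpha}=0$ forces $z_{Aa}=0$, which isolates the monomorphic candidates $(\zeta,0,0)$ and $(0,0,\zeta)$. Their stability I would get from the $3\times3$ Jacobian; I expect eigenvalues $-b\beta$, $-b(\beta-1)$ and $d-b\beta$, all negative under $b>d>0$ and $\beta>1$ (the middle one replacing the COD1 value $-\tfrac b2(\beta-1)$, the factor $2$ disappearing because now $p_\beta(AA,Aa)=1$).

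For an interior equilibrium, multiplying the $AA$-equation by $z_{aa}$, the $aa$-equation by $z_{AA}$ and subtracting reproduces the factorisation of \eqref{eq_prodnul}, now in the form
\[
(z_{AA}-z_{aa})\Bigl(\beta\, z_{AA}z_{aa}-\tfrac14 z_{Aa}^2\Bigr)=0,
\]
so either $z_{AA}=z_{aa}$ or $z_{Aa}^2=4\beta z_{AA}z_{aa}$. The second branch is the step I expect to be the main obstacle: here the weight $\beta$ sits inside the product and cannot be factored out as in COD1, so the contradiction is no longer immediate. Substituting $z_{AA}=z_{Aa}^2/(4\beta z_{aa})$ into the equality of the $Aa$- and $aa$-equations and clearing denominators, I expect everything to collapse to $(1-\beta)\bigl(z_{Aa}^2+2z_{Aa}z_{aa}+4\beta z_{aa}^2\bigr)=0$, whose bracket is a sum of positive terms; with $\beta>1$ this is the required contradiction. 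I would carry out this reduction explicitly rather than by analogy, because it hinges precisely on the coefficient $\tfrac12$ of $z_{Aa}^2$ in the $Aa$-equation of \eqref{systcompletcodom2}: with $1$ in its place the $z_{Aa}z_{aa}$ cross term would not cancel and spurious interior equilibria would appear.

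On the branch $z_{AA}=z_{aa}=:x$, $z_{Aa}=:y$, dividing the $AA$-equation by $x$ and the $Aa$-equation by $y$ and equating yields, with $\delta':=x/y$, the cubic $Q(\delta')=0$. I would show $Q$ has a unique positive root: $Q(0)=-\tfrac18<0$, $Q$ is decreasing then increasing on $(0,\infty)$ (its single interior critical point being a minimum), so there is exactly one sign change, and $Q(\tfrac12)=(1-\beta)/8<0$ puts that root in $(\tfrac12,\infty)$. The magnitude $\xi'$ follows from the $Aa$-equation together with the identity $\tfrac12+2\delta'+2\delta'^2=\tfrac12(2\delta'+1)^2$, giving $\xi'=(b(\delta'+\tfrac12)-d)/((2\delta'+1)c)$. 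For instability I would use the transverse direction: subtracting the two equations gives $\dot z_{AA}-\dot z_{aa}=(z_{AA}-z_{aa})\bigl(b\beta-(d+cN)-\tfrac{b(\beta-1)}{N}z_{Aa}\bigr)$, so the transverse eigenvalue is the bracket evaluated at $(\delta'\xi',\xi',\delta'\xi')$; using $d+cN=b(2\delta'+1)/2$ and eliminating $\beta$ through $Q(\delta')=0$, its sign reduces to that of $(\delta'+1)(2\delta'-1)(2\delta'+1)$, positive since $\delta'>\tfrac12$. This furnishes the positive eigenvalue, the remaining two having negative real part.

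For the convergence statements I would reuse the two Lyapunov functions of Lemma~\ref{lemmecod}. The factorisation of $\dot z_{AA}-\dot z_{aa}$ makes $\mathcal{A}^=$, $\mathcal{A}^>$ and $\mathcal{A}^<$ invariant. On $\mathcal{A}^=$ one computes $\frac{d}{dt}(x/y)=-\tfrac{2b}{N}Q(x/y)\,y$, so $W_1=\ln\bigl|z_{AA}/(\delta' z_{Aa})-1\bigr|$ is a Lyapunov function, the cofactor $Q(X)/(X-\delta')$ being positive on $[0,\infty)$ (it has no positive root and equals $1/(8\delta')>0$ at $X=0$); LaSalle then gives convergence to $(\delta'\xi',\xi',\delta'\xi')$. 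On $\mathcal{A}^>$ I would verify, the second computation I would not take on faith, that $W_2=\ln\frac{z_{AA}+z_{aa}+\beta z_{Aa}}{z_{AA}-z_{aa}}$ still has non-positive derivative; writing $S=z_{AA}+z_{aa}$ the numerator telescopes to $\beta S^2+(1+\beta)Sy+\tfrac{1+\beta}2y^2-(\beta S+y)(S+\beta y)=(1-\beta)\,y\,(\beta S+\tfrac12 y)\le0$, with equality exactly on $\{z_{Aa}=0\}$, whose largest invariant subset of $\mathcal{A}^>$ is $\{z_{Aa}=z_{aa}=0\}$ and flows to $(\zeta,0,0)$; LaSalle concludes, and $\mathcal{A}^<$ is symmetric. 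An initial condition with $z_{AA}=z_{aa}$ stays on $\mathcal{A}^=$ and hence converges to the unstable interior point, completing the trichotomy.
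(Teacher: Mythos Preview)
Your proposal is correct and follows essentially the same route as the paper, which simply says ``same ideas and proofs as in Section~\ref{ssec_COD1}''; in particular your treatment of the fixed points, of the cubic $Q$, and of the Lyapunov function $W_1$ on $\mathcal{A}^=$ matches the paper's scheme exactly. The one difference is the Lyapunov function on $\mathcal{A}^>$: the paper replaces $W_2$ by the slightly simpler
\[
W_3(\mathbf{z})=\ln\Bigl(\frac{z_{AA}+z_{Aa}+z_{aa}}{z_{AA}-z_{aa}}\Bigr)=\ln\Bigl(\frac{N}{z_{AA}-z_{aa}}\Bigr),
\]
whose derivative along the COD2 flow is $\dfrac{b(1-\beta)}{N^2}\bigl((z_{AA}+z_{aa})z_{Aa}+2z_{AA}z_{aa}\bigr)\le 0$. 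Your computation shows that the original $W_2$ from the COD1 proof also works here, which is a valid (and arguably more uniform) choice; the paper's $W_3$ just shortens the algebra.
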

The proof is really similar except that we use the Lyapunov function
$$
W_3({\bf{z}})=\ln \left( \frac{z_{AA}+z_{aa}+z_{Aa}}{z_{AA}-z_{aa}}\right)
$$
to end it.

\subsection{Dominant case} \label{dom_sans_mig}

We recall that the dynamics of the system in one patch is given by the system of equations:

\begin{equation}\label{sys1patch0dom}
 \left\{
\begin{aligned}
& \dot z_{AA}=\frac{b\beta }{N}\left( z_{AA}+\frac{1}{2}z_{Aa}\right)^2 -(d+cN)z_{AA}\\
& \dot z_{Aa}=\frac{b}{N}( \beta z_{Aa}+2z_{aa})\left(z_{AA}+\frac{z_{Aa}}{2} \right)-(d+cN)z_{Aa}\\
& \dot z_{aa}=\frac{b }{N}\left( \beta z_{aa}^2+z_{aa}z_{Aa}+\frac{\beta}{4} z_{Aa}^2 \right)-(d+cN)z_{aa},
\end{aligned}
\right.
\end{equation}
where $N=z_{AA}+z_{Aa}+z_{aa}$.

In this section, we will prove the following result:
\begin{lem}
 The dynamical system \eqref{sys1patch0dom} admits exactly three fixed points:
 \begin{itemize}
  \item Two stable fixed points 
  $$ (\zeta,0,0) \quad \text{and} \quad (0,0,\zeta). $$
  \item One unstable fixed point
 $$ \frac{b(\beta+1)-2d}{4c} \left( \frac{\sqrt{\beta+1}-1 }{\sqrt{\beta+1}+1}, \frac{2}{\sqrt{\beta+1}+1},
1\right). $$
 \end{itemize}
\end{lem}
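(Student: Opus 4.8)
The plan is to first enumerate the fixed points of \eqref{sys1patch0dom} and then treat the two monomorphic equilibria by direct linearization and the interior (polymorphic) one by an allele-frequency reduction. I would begin exactly as in the codominant Lemma by showing that a vanishing coordinate propagates: setting the right-hand sides to zero, $z_{AA}=0$ forces $z_{Aa}=0$ through the first equation, $z_{aa}=0$ forces $z_{Aa}=0$ through the third, and $z_{Aa}=0$ with both homozygotes positive is excluded by the second. Hence every fixed point is either monomorphic — and solving $b\beta=d+c\zeta$ gives $(\zeta,0,0)$ and $(0,0,\zeta)$ — or has all three coordinates strictly positive.

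To locate the interior point I would introduce the phenotype-$A$ frequency $f=(z_{AA}+z_{Aa})/N$ together with $u=z_{AA}+z_{Aa}/2$ and $\phi=u/N$. A direct computation yields the closed identities $\dot u=u[\,b(1+(\beta-1)f)-(d+cN)\,]$ and $\dot N=N[\,bg(f)-(d+cN)\,]$ with $g(f)=\beta-2(\beta-1)f(1-f)$, and therefore $\dot\phi=\phi\, b(\beta-1)(2f-1)(1-f)$. At an interior fixed point $\phi>0$, so $\dot\phi=0$ forces $f=1/2$ (the alternative $f=1$ means $z_{aa}=0$, excluded). Then $\dot N=0$ fixes $N=\tfrac{b(\beta+1)-2d}{2c}$, and $\dot z_{Aa}=0$ reduces, writing $y=z_{Aa}/N$, to $\beta y^2+2y-1=0$; its unique positive root $y=\tfrac{1}{\sqrt{\beta+1}+1}$ together with $f=1/2$ recovers exactly the announced coordinates. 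Uniqueness of that root gives a single interior equilibrium, hence three in total.

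For stability I would compute the $3\times3$ Jacobian at each equilibrium. At $(0,0,\zeta)$ it is lower triangular with diagonal $(-b\beta,\,-b(\beta-1),\,-c\zeta)$, all negative since $\beta>1$, so this point is hyperbolically stable. At $(\zeta,0,0)$, however, the Jacobian is upper triangular with diagonal $(-c\zeta,\,0,\,-b\beta)$: a zero eigenvalue appears, reflecting that a rare recessive $a$ hides in heterozygotes $Aa$ of phenotype $A$ and thus behaves neutrally. This degeneracy is the main obstacle, since linearization is inconclusive here, and it is precisely where the Local Center Manifold Theorem is needed.

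To resolve $(\zeta,0,0)$ I would build the one-dimensional center manifold tangent to the eigenvector $(1,-1,0)$; slaving the fast variables gives $z_{aa}\approx z_{Aa}^2/(4\zeta)$, and feeding this into the reduced identity $\dot\phi=\phi\,b(\beta-1)(2f-1)(1-f)$ near $f=1$ produces, for the rare-allele frequency $q=1-\phi$, the reduced flow $\dot q=-b(\beta-1)q^2+O(q^3)$. This is strictly negative for the physically admissible $q>0$, and since the positive cone is forward invariant (the field points inward on each face), $(\zeta,0,0)$ is asymptotically stable within the biological domain; alternatively, as every neighbourhood of $(\zeta,0,0)$ lies in $\{z_{AA}>z_{aa}\}$, the convergence statement of this section gives the same conclusion. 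Finally, for the interior point I would exhibit a single eigenvalue with positive real part: at $f=1/2$ one has $\partial_f[(2f-1)(1-f)]=1>0$, so the reduced dynamics display positive frequency dependence and amplify any phenotype imbalance — exactly the mechanism by which one explicit positive eigenvalue established instability in the codominant Lemma — so the interior equilibrium is unstable.
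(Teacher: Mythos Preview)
Your enumeration of fixed points and your computation of the interior equilibrium via the phenotype frequency $f=(z_{AA}+z_{Aa})/N$ and the allele frequency $\phi=u/N$ are correct and cleaner than the paper's route. The paper instead manipulates \eqref{sys1patchdom} directly to extract the two algebraic relations $\beta z_{Aa}^2=4z_{AA}z_{aa}$ and $z_{AA}+z_{Aa}=z_{aa}$; your identity $\dot\phi=\phi\,b(\beta-1)(2f-1)(1-f)$ yields $f=1/2$ in one stroke and then $\dot N=0$, $\dot z_{aa}=0$ recover the same coordinates. For the stability of $(\zeta,0,0)$ you use the center manifold (with $z_{aa}\approx z_{Aa}^2/(4\zeta)$ and reduced flow $\dot q=-b(\beta-1)q^2$), whereas the paper bypasses the zero eigenvalue entirely with a Lyapunov argument: it shows $\{z_{AA}\ge z_{aa}\}$ is forward invariant and that $-\ln\!\big((z_{AA}-z_{aa})/N\big)$ decreases there, forcing convergence to $(\zeta,0,0)$. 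Both approaches are valid; yours gives the local rate $1/t$, theirs gives a global basin.

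The genuine gap is your instability argument for the interior point. The identity $\dot\phi=\phi\,b(\beta-1)(2f-1)(1-f)$ couples two \emph{different} coordinates $\phi$ and $f$; computing $\partial_f[(2f-1)(1-f)]|_{f=1/2}=1>0$ gives you an off-diagonal entry of the Jacobian in $(\phi,f,\cdot)$ coordinates, not an eigenvalue. There is no one-dimensional closed reduction here (the set $\{f>1/2\}$ is not invariant in the dominant model, unlike $\{z_{AA}>z_{aa}\}$ in the codominant one), so ``positive frequency dependence'' remains a heuristic. The paper closes this by writing the characteristic polynomial of the $3\times3$ Jacobian as $x^3+ax^2+ex+f$ and checking $f<0$ and $a>0$: the product of eigenvalues is $-f>0$ and their sum is $-a<0$, which forces exactly one positive real eigenvalue (or a positive real root if two are complex). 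To complete your proof you need either this determinant/trace sign computation or an explicit unstable eigendirection; the $\dot\phi$ identity alone does not supply it.
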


\begin{proof}
The fixed points $(z_{AA},z_{Aa},z_{aa})$ are solutions to the following system of equations:

\begin{equation} \label{sys1patchdom}
 \left\{
\begin{aligned}
 &\frac{b\beta }{N}\left( z_{AA}+\frac{1}{2}z_{Aa}\right)^2 =(d+cN)z_{AA}\\
& \frac{b}{N}( \beta z_{Aa}+2z_{aa})\left(z_{AA}+\frac{z_{Aa}}{2} \right)=(d+cN)z_{Aa}\\
& \frac{b }{N}\left( \beta z_{aa}^2+z_{aa}z_{Aa}+\frac{\beta}{4} z_{Aa}^2 \right)=(d+cN)z_{aa}
\end{aligned}
\right.
\end{equation}

Again we can show that if $z_{\alpha \alpha}=0$ for $\alpha \in \{A,a\}$, then necessarily $z_{Aa}=0$, and we find the two following monomorphic equilibria
 $$ \left( \zeta , 0, 0\right) \text{ and } \left(0,0, \zeta \right)$$
with respective eigenvalues
 $$ (0,-b \beta,d-b \beta) \text{ and } (-b \beta,b-b \beta,d-b \beta). $$
Hence the equilibrium $(0,0,\zeta)$ is stable. To find the stability of the equilibrium $(\zeta,0,0)$, we will use a Lyapunov function.

First notice that the derivative of the total population size satisfies:
$$ \frac{d}{dt}N= (b\beta-d-cN)N- 2b(\beta-1)\frac{(z_{AA}+z_{Aa})z_{aa}}{N} ,$$
and the derivative of the difference $z_{AA}-z_{aa}$ satisfies:
$$ \frac{d}{dt}(z_{AA}-z_{aa})= (z_{AA}-z_{aa})(b\beta-d-cN)+b(\beta-1)\frac{z_{Aa}z_{aa}}{N}. $$
In particular, the set $\mathcal{A}:= \{z_{AA}\geq z_{aa}\}$ is stable. 
Indeed, if we consider a point where $z_{AA}=z_{aa}$
then the previous derivative is positive and we stay in the set $\mathcal{A}$. 
Moreover in this set 
$$ \frac{d}{dt}\ln \left(\frac{z_{AA}-z_{aa}}{N} \right)= \frac{b(\beta-1)}{N}\left( \frac{z_{Aa}z_{aa}}{z_{AA}-z_{aa}}+ 2 \frac{(z_{AA}+z_{Aa})z_{aa}}{N} \right)
\geq 0 .$$
We deduce that $- \ln((z_{AA}-z_{aa})/N)$
is a Lyapunov function 
which cancels out of the set $\{z_{aa}=0\} \cup \{z_{AA}=z_{Aa}=0\}$. 
Beside the only fixed point in this set is $(\zeta,0,0)$. 
Applying Theorem 1 in \cite{lasalle1960some}, we deduce that any solution to \eqref{sys1patch0dom} starting from $\mathcal{A}$ converges to 
the fixed point $(\zeta,0,0)$.

 Let us now find and study the positive equilibrium. By subtracting the second equality to the first one in \eqref{sys1patchdom}, we get
 \begin{equation*}
 \frac{\beta}{z_{AA}}\left( z_{AA}+\frac{1}{2} z_{Aa} \right)=\frac{1}{z_{Aa}} \left( \beta z_{Aa}+2z_{aa} \right),
\end{equation*}
which gives the equality
\begin{equation}
 \label{eq_egalite1}\beta z_{Aa}^2=4 z_{AA} z_{aa}.
\end{equation}
Besides,
\begin{equation*}
 \frac{b }{z_{aa}}\left( \beta z_{aa}^2+z_{aa}z_{Aa}+\frac{\beta}{4} z_{Aa}^2 \right)=\frac{b \beta}{z_{aa}}\left(z_{aa}+\frac{1}{2}z_{Aa}\right)^2-b(\beta-1)z_{Aa}.
\end{equation*}
Hence, by subtracting the third to the first inequality in \eqref{sys1patchdom}, we find
\begin{equation*}
 \begin{aligned}
  \frac{\beta}{z_{AA}z_{aa}}\left[ z_{aa}\left( z_{AA}+\frac{1}{2}z_{Aa}\right)^2-z_{AA}\left(z_{aa}+\frac{1}{2}z_{Aa}\right)^2 \right]+(\beta-1)z_{Aa}&=0\\
\Leftrightarrow \beta (z_{aa}-z_{AA})\left(\frac{1}{4}z_{Aa}^2-z_{aa}z_{AA}\right)+(\beta-1)z_{Aa}z_{AA}z_{aa}&=0.
 \end{aligned}
\end{equation*}
Finally, by using \eqref{eq_egalite1}, we deduce
\begin{equation*}
 (\beta-1)\frac{z_{Aa}^2}{4} (z_{AA}-z_{aa}+z_{Aa})=0,
\end{equation*}
which leads to
\begin{equation}
 \label{eq_egalite2}
z_{AA}+z_{Aa}=z_{aa}.
\end{equation}
From \eqref{eq_egalite1} and \eqref{eq_egalite2}, we get
$$ z_{AA}= \frac{z_{Aa}}{2}\left( \sqrt{\beta+1}- 1 \right) \quad \text{and} \quad  z_{aa}= \frac{z_{Aa}}{2}\left( \sqrt{\beta+1}+ 1 \right) $$
If we inject these inequalities in the derivative of $z_{Aa}$, we obtain 
$$ 0=\left( b \frac{\beta+1}{2}-d-cz_{Aa}(\sqrt{\beta+1}+1) \right)z_{Aa}. $$
Hence
$$ z_{Aa}=\frac{b(\beta+1)/2-d}{c(\sqrt{\beta+1}+1)}. $$
We deduce that the positive equilibrium we are looking for, if it exists, has necessarily the following coordinates
 $$ \frac{b(\beta+1)-2d}{4c} \left( \frac{\sqrt{\beta+1}-1 }{\sqrt{\beta+1}+1}, \frac{2}{\sqrt{\beta+1}+1},
1\right). $$
Conversely, we can check that this point is indeed an equilibrium.
The eigenvalues of the Jacobian matrix at this point are $ (2\beta(\sqrt{\beta+1}+1)^2)^{-1} $ times the roots of the polynomial
$$ x^3 +a x^2+ex+f ,$$
with
\begin{multline*}
0>f= -2 b^2 \beta^3 \left(\beta^2-1\right)(b \beta+b-2 d)\\
\left(\left(\sqrt{\beta+1}+5\right) \beta^2+4 \left(3 \sqrt{\beta+1}+5\right) \beta +16 \left(\sqrt{\beta+1}+1\right)\right) 
\end{multline*}
and 
\begin{multline*} 0<a= 6 b \beta + 6 b \beta^2 + 6 b \beta \sqrt{1 + \beta} + 
 2 b \beta^2 \sqrt{1 + \beta} - 4 \beta d \\
 - 2 \beta^2 d - 4 \beta \sqrt{1 + \beta} d .\end{multline*}
We deduce that there are two negative and one positive eigenvalues. This equilibrium is thus unstable.

As a conclusion, in the dominant case, the only stable fixed points are $(\zeta,0,0)$ and $(0,0,\zeta)$.
\end{proof}

\section{Stability of the monomorphic fixed points in the system with migration}

As the systems \eqref{systcompletcodom} and \eqref{systcompletdom} are complex, we are not able to derive an analytical form for all the fixed points. As a consequence in the next two subsections we focus 
on the fixed points which are monomorphic in one patch, $(\zeta,0,0,\zeta,0,0)$, $(0,0,\zeta,0,0,\zeta)$,
$(\zeta,0,0,0,0,\zeta)$, and $(0,0,\zeta,\zeta,0,0)$. 

\subsection{First codominant case} \label{cod1_mig}

We will prove that for the first codominant model the monomorphic fixed points have the following properties:
\begin{lem}\label{lemmecod2}
 There are four fixed points which are monomorphic in every patch:
 \begin{itemize}
  \item $(\zeta,0,0,\zeta,0,0)$ and $(0,0,\zeta,0,0,\zeta)$ are stable.
  \item $(\zeta,0,0,0,0,\zeta)$ and $(0,0,\zeta,\zeta,0,0)$ are stable if $2p<b\beta(\beta-1)$ and unstable if $2p>b\beta(\beta-1)$.
 \end{itemize}
\end{lem}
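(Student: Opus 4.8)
The plan is to determine stability by linearising system \eqref{systcompletcodom} at each fixed point and locating the eigenvalues of the $6\times 6$ Jacobian. The key structural observation is that each of these four fixed points is invariant under a natural involution of the system, so the Jacobian commutes with the corresponding coordinate permutation and block-diagonalises into two $3\times 3$ blocks. Throughout I would use the identity $d+c\zeta=b\beta$, which follows from the definition of $\zeta$ and makes the entries manageable.

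For the two fixed points carrying the same allele in both patches, e.g.\ $(\zeta,0,0,\zeta,0,0)$, the relevant symmetry is the patch exchange $(z_{AA,1},z_{Aa,1},z_{aa,1})\leftrightarrow(z_{AA,2},z_{Aa,2},z_{aa,2})$. I would decompose perturbations into patch-sum (symmetric) and patch-difference (antisymmetric) modes. In the symmetric mode the migration gain cancels the migration loss exactly, so the symmetric block coincides with the single-patch Jacobian at $(\zeta,0,0)$; by the computation in Section~\ref{cod1_sans_mig} its eigenvalues are $(-b\beta,-\tfrac b2(\beta-1),-b\beta+d)$, all negative. In the antisymmetric mode migration no longer cancels but only damps the difference, and I expect the block to be upper triangular with diagonal $(d-b\beta,\,-\tfrac{b(\beta-1)}2-p,\,-b\beta-2p)$, hence strictly negative eigenvalues for every $p\ge 0$. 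This yields stability of these two fixed points for all parameter values.

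For the two fixed points carrying different alleles, e.g.\ $(\zeta,0,0,0,0,\zeta)$, the symmetry is the combined allele-and-patch swap $(z_{AA,1},z_{Aa,1},z_{aa,1},z_{AA,2},z_{Aa,2},z_{aa,2})\mapsto(z_{aa,2},z_{Aa,2},z_{AA,2},z_{aa,1},z_{Aa,1},z_{AA,1})$, which fixes the point. Writing perturbations in the $\pm1$ eigenspaces of this involution again gives two $3\times 3$ blocks, each of which I expect to be block triangular with one explicit eigenvalue $d-b\beta<0$ and a residual $2\times2$ matrix. The antisymmetric $2\times2$ block should have negative trace and a determinant that is a sum of manifestly positive terms, hence two eigenvalues with negative real part for all $p$. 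The decisive object is the symmetric $2\times2$ block
\[
\begin{pmatrix}-\dfrac{b(\beta-1)}2 & 2b\\[1mm] \dfrac p2 & -b\beta\end{pmatrix},
\]
whose trace is negative and whose determinant equals $\tfrac12 b^2\beta(\beta-1)-bp$. This determinant is positive exactly when $2p<b\beta(\beta-1)$ and negative when $2p>b\beta(\beta-1)$; together with the negative trace this shows both eigenvalues have negative real part below the threshold and one becomes positive above it. Hence $(\zeta,0,0,0,0,\zeta)$ is stable for $2p<b\beta(\beta-1)$ and unstable for $2p>b\beta(\beta-1)$, and by the allele/patch symmetry the same holds for $(0,0,\zeta,\zeta,0,0)$.

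The main obstacle is purely computational: producing the correct Jacobian entries. Because each resident coordinate equals the large value $\zeta$, perturbations feed back simultaneously through the shared denominator $N_i$ in the birth terms, through the density-dependent death rate $d+cN_i$, and through the migration fractions, so each entry mixes several contributions and one must retain only first-order terms (many quadratic birth contributions vanish at the monomorphic points). Once the entries are obtained, the involution guarantees the $3\times3$ block structure, and the block-triangular form reduces the spectral question to the sign of a single $2\times2$ determinant, which is precisely where the threshold $b\beta(\beta-1)/2$ arises. As a consistency check, at $p=0$ the Jacobian is block diagonal across patches and all six eigenvalues reduce to the single-patch values of Section~\ref{cod1_sans_mig}, confirming stability at $p=0$ and that the only sign change occurs at $p_{crit}$.
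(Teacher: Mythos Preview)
Your approach is correct and reaches the same conclusion as the paper, but by a different and more structured route. The paper's argument is a bare computation: it writes down the six eigenvalues of the Jacobian at each fixed point (presumably obtained by computer algebra) and checks signs, observing for $(\zeta,0,0,0,0,\zeta)$ that the single potentially positive eigenvalue $\tfrac14\big(\sqrt{b}\sqrt{b(\beta+1)^2+16p}-b(3\beta-1)\big)$ is negative precisely when $2p<b\beta(\beta-1)$. Your symmetry decomposition explains this output: the patch-swap and the combined allele/patch swap each commute with the linearisation at the respective fixed point, so the Jacobian splits into two $3\times3$ blocks, and your further reduction isolates the critical $2\times2$ block
\[
\begin{pmatrix}-\dfrac{b(\beta-1)}2 & 2b\\[1mm] \dfrac p2 & -b\beta\end{pmatrix},
\]
whose characteristic polynomial indeed produces the paper's third and fourth eigenvalues (trace $-\tfrac{b(3\beta-1)}{2}$, discriminant $\tfrac{b}{4}(b(\beta+1)^2+16p)$). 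The determinant condition $\tfrac12 b^2\beta(\beta-1)>bp$ then yields the threshold transparently, rather than by algebraic manipulation of a square root. Likewise your antisymmetric $2\times2$ block recovers the paper's fifth and sixth eigenvalues, with determinant $\tfrac12\big(b^2\beta(\beta-1)+4p(b\beta+p)\big)>0$ confirming they never change sign. What you gain is insight into \emph{why} the instability appears (it lives in the symmetric-under-swap mode coupling $Aa$ to the minority homozygote) and a clean route that avoids radicals until the end; what the paper's direct listing buys is brevity and the avoidance of the bookkeeping you flag as the main obstacle, namely tracking all first-order contributions from the $N_i$-denominators, the density-dependent death, and the migration fractions when computing the blocks. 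Your hedged ``I expect'' for the antisymmetric block at $(\zeta,0,0,\zeta,0,0)$ is warranted only in that the block is lower- rather than upper-triangular in the natural ordering, but its diagonal and hence its spectrum are exactly as you state.
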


Since the alleles $a$ and $A$ are codominant, the two fixed points $(\zeta,0,0,\zeta,0,0)$ and $(0,0,\zeta,0,0,\zeta)$ have the 
same characteristics.  The eigenvalues of the 
Jacobian Matrix at these fixed points are
$$\left( -b \beta, -\frac{1}{2} b(\beta-1), d-b \beta, d-b \beta, -b \beta-2 p, -\frac{1}{2} (b (\beta-1)+2 p) \right).$$
They are all negative, and we conclude that $(\zeta,0,0,\zeta,0,0)$ and $(0,0,\zeta,0,0,\zeta)$ are two stable fixed points.

In the codominant case, the two fixed points $(\zeta,0,0,0,0,\zeta)$ and $(0,0,\zeta,\zeta,0,0)$ have the same properties since the 
alleles $a$ and $A$ are codominant. The eigenvalues of the 
Jacobian Matrix at these fixed points are
\begin{multline*}\bigg( d-b \beta, d-b \beta, \frac{1}{4} \left(-\sqrt{b} \sqrt{b \beta^2+2 b \beta+b+16 p}-3 b \beta+b\right),\\ 
\frac{1}{4} \left(\sqrt{b} \sqrt{b \beta^2+2 b \beta+b+16 p}-3 b \beta+b\right), \\
\frac{1}{4} \left(-\sqrt{b^2 \beta^2+2 b^2 \beta+b^2+4 b \beta p-12 b p+4 p^2}-3 b \beta+b-6 p\right), \\
\frac{1}{4} \left(\sqrt{b^2 \beta^2+2 b^2 \beta+b^2+4 b \beta p-12 b p+4 p^2}-3 b \beta+b-6 p\right) \bigg). \end{multline*}
All the eigenvalues, except the fourth one, are negative. The fourth eigenvalue is negative only if the migration parameter $p$ is small enough:
$$ \frac{1}{4} \left(\sqrt{b} \sqrt{b \beta^2+2 b \beta+b+16 p}-3 b \beta+b\right)<0 \Leftrightarrow 2p < b \beta (\beta -1) $$
Hence the eigenvalues $(\zeta,0,0,0,0,\zeta)$ and $(0,0,\zeta,\zeta,0,0)$ are stable only if $2p < b \beta (\beta -1)$.

\subsection{Second codominant case} \label{cod2_mig}

Lemma \ref{lemmecod2} is still valid for the second codominant case, although the eigenvalues of the fixed points are not the same:
$$ (-b \beta, b - b \beta, 
 b - b \beta, -b \beta + d, -b \beta + d, -b \beta - 2 p) $$
for the fixed points $(\zeta,0,0,\zeta,0,0)$ and $(0,0,\zeta,0,0,\zeta)$, and 
\begin{multline*} \Big(-b \beta,d-b \beta,d-b \beta,-b\beta+b-2 p,
\frac{1}{2} \left(-2 b \beta-\sqrt{b} \sqrt{b+8 p}+b\right),\\
\frac{1}{2} \left(-2 b \beta+\sqrt{b} \sqrt{b+8 p}+b\right)\Big) \end{multline*}
for the fixed points $(\zeta,0,0,0,0,\zeta)$ and $(0,0,\zeta,\zeta,0,0)$.

\subsection{Dominant case} \label{dom_mig}

In the dominant case, the two fixed points $(\zeta,0,0,\zeta,0,0)$ and $(0,0,\zeta,0,0,\zeta)$ are not symmetrical, 
as the two alleles $A$ and $a$ do not have the same role. We prove the following Lemma.
\begin{lem}
There are four fixed points which are monomorphic in every patch:
 \begin{itemize}
  \item $(0,0,\zeta,0,0,\zeta)$ (equilibrium with genotype $aa$ in both patches) is stable and there exists a neighborhood $\mathcal{V}^{aa}$ of $(0,0,\zeta,0,0,\zeta)$ such that any solution starting from $\mathcal{V}^{aa}$ converges exponentially fast to the equilibrium,
  \item $(\zeta,0,0,\zeta,0,0)$ (equilibrium with genotype $AA$ in both patches) is stable and there exists a neighborhood $\mathcal{V}^{AA}$ of $(\zeta,0,0,\zeta,0,0)$ such that any solution starting from $\mathcal{V}^{AA}$ converges to the equilibrium with a rate $t\mapsto \frac{1}{t}$,
  \item $(\zeta,0,0,0,0,\zeta)$ and $(0,0,\zeta,\zeta,0,0)$ are two equilibria whose Jacobian Matrices admit five negative eigenvalues and a null eigenvalue.
 \end{itemize}
\end{lem}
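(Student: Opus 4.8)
The plan is to split the four equilibria into three groups: the two symmetric ones $(0,0,\zeta,0,0,\zeta)$ and $(\zeta,0,0,\zeta,0,0)$, treated via the patch--swap symmetry, and the two differentiated ones $(\zeta,0,0,0,0,\zeta)$, $(0,0,\zeta,\zeta,0,0)$, treated by a direct eigenvalue computation. First I would check by substitution into \eqref{systcompletdom} that each candidate annihilates the right--hand side: a patch carrying a single genotype emits none of the foreign genotypes at this state, so all migration fluxes either vanish or cancel between two identical patches, and the within--patch balance reduces to $b\beta\,\zeta-(d+c\zeta)\zeta=0$, which holds by the definition of $\zeta$. For the two symmetric equilibria I would then use that the Jacobian $J$ commutes with the involution exchanging the patches, hence preserves the symmetric subspace $V_+=\{(\mathbf u,\mathbf u)\}$ and the antisymmetric one $V_-=\{(\mathbf u,-\mathbf u)\}$. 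On $V_+$ the immigration and emigration linearisations coincide and cancel, so $J|_{V_+}$ is the one--patch Jacobian $L$ of \ref{dom_sans_mig}; on $V_-$ they add, giving $J|_{V_-}=L-2p\,E$ with $E$ the emigration operator.

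For the $aa$--equilibrium both $L$ and $L-2pE$ are triangular in the basis $(z_{AA},z_{Aa},z_{aa})$, with diagonals $(-b\beta,\,b-b\beta,\,d-b\beta)$ and $(-b\beta-2p,\,b-b\beta-2p,\,d-b\beta)$; these are all strictly negative for $\beta\ge1$, $p>0$, so $(0,0,\zeta,0,0,\zeta)$ is hyperbolic and attracts a neighbourhood exponentially by standard linearisation. The equilibrium $(\zeta,0,0,\zeta,0,0)$ is the delicate one: here $L$ has spectrum $(d-b\beta,-b\beta,0)$ and $L-2pE$ has spectrum $(d-b\beta,-b\beta-2p,0)$, so $J$ has four eigenvalues with negative real part and a \emph{double} null eigenvalue, whose eigenvectors are the direction $(1,-1,0)$ in each patch, i.e.\ the exchange of an $AA$ for an $Aa$ individual. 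This is precisely the neutrality of the recessive allele $a$ when hidden in heterozygotes.

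I would then invoke the Local Center Manifold Theorem to obtain a two--dimensional attracting centre manifold parametrised by the recessive--allele contents $(w_1,w_2)$ carried as heterozygotes in the two patches, and compute the reduced flow to second order. Since $aa$ individuals arise only from matings between $a$--carriers (the term $\tfrac{\beta}{4}z_{Aa}^2$ in $\dot z_{aa}$), their abundance on the manifold is of order $w_i^2$, so the reduced system reads $\dot w_i=-\kappa\,w_i^2+\text{(higher order)}$ with $\kappa>0$ expressing the disadvantage of the rare recessive phenotype. The main obstacle of the lemma lies here: one must confirm $\kappa>0$, control the inter--patch coupling (the rare $aa$ individuals do migrate, at the same quadratic order), and convert the quadratic reduced dynamics into the algebraic rate $t\mapsto 1/t$. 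A convenient route is to follow the total content $W=w_1+w_2$, on which the conservative migration coupling cancels at leading order, yielding a closed scalar inequality of the form $\dot W\le -\kappa' W^2$ and hence $W(t)\le C/t$.

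Finally, for the differentiated equilibria the patch symmetry is broken, so I would assemble the full $6\times6$ Jacobian. It is nearly triangular: the columns attached to $z_{AA,1}$, $z_{Aa,1}$ and $z_{aa,2}$ feed into no other equation, which lets me peel off the eigenvalues $d-b\beta$ (twice) and a single $0$, leaving a $3\times3$ block on $(z_{aa,1},z_{AA,2},z_{Aa,2})$ with characteristic cubic $\lambda^3+a_2\lambda^2+a_1\lambda+a_0$. Setting $K:=b\beta+p$, the Routh--Hurwitz conditions reduce to the positivity of $a_2=3K-b$, of $a_1$, of $a_0=b\big[b^2\beta^2(\beta-1)+b\beta p(3\beta-2)+2p^2(\beta-1)\big]$, and of
$$ a_1a_2-a_0=2\big[(K-b)(2K-p)(2K+p)+b^2K\big]. $$
Since $K-b=b(\beta-1)+p\ge0$ and $2K\pm p>0$, every bracket is nonnegative and each condition holds for $\beta\ge1$, $p>0$; the cubic is therefore Hurwitz. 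Combined with the three eigenvalues peeled off, this gives five eigenvalues with negative real part and one null eigenvalue, as claimed, the null direction being again the hidden recessive allele in the $AA$--patch, which is exactly why linear theory leaves the stability of these equilibria undecided.
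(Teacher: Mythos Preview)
Your argument is correct in all three parts. For the two symmetric equilibria you organise the computation through the patch--swap involution, obtaining the spectrum as $\mathrm{spec}(L)\cup\mathrm{spec}(L-2pE)$; the paper simply lists the six eigenvalues, so this is a clarifying repackaging rather than a different method. The centre--manifold treatment of $(\zeta,0,0,\zeta,0,0)$ is likewise the paper's: it computes the second--order graph $h$ explicitly and obtains the planar reduced field
\[
\dot x_i=\frac{bc}{2(b\beta-d)(b\beta+2p)}\bigl[-\beta(b(\beta-1)+2p)\,x_i^2+p\,(x_1^2+x_2^2)\bigr]+O(\lVert x\rVert^3),
\]
then sums and uses $\tfrac12(x_1+x_2)^2\le x_1^2+x_2^2\le(x_1+x_2)^2$ for $x_i\ge0$ to squeeze $\dot W$ between two negative multiples of $W^2$. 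Your sketch reproduces this; just note that the \emph{two--sided} estimate (not only the upper bound you write) is what certifies the rate as genuinely $t^{-1}$ rather than merely $O(t^{-1})$.

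Where you genuinely diverge from the paper is the differentiated equilibria. The paper records the three obvious eigenvalues $(0,\,d-b\beta,\,d-b\beta)$ and the cubic $X^3+AX^2+EX+F$, observes $A,E,F>0$ so real roots are negative, and then disposes of a possible complex conjugate pair by writing the roots as $-\rho$ and $\phi e^{\pm i\theta}$, combining $4\cos^2\theta<1$ with the positivity of $AE-2F$ to force $\cos\theta<0$. Your route is more direct: you first make the block--triangular structure explicit (perturbing the resident coordinates $z_{AA,1},z_{Aa,1},z_{aa,2}$ does not feed into the invader block $(z_{aa,1},z_{AA,2},z_{Aa,2})$ at first order), which transparently splits the $3+3$ eigenvalues, and then verify Routh--Hurwitz for the cubic via the factorisation
\[
a_1a_2-a_0=2\bigl[(K-b)(2K-p)(2K+p)+b^2K\bigr],\qquad K=b\beta+p,
\]
which I have checked against the paper's coefficients. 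Both arguments prove the same conclusion (five eigenvalues with negative real part and one null), but yours avoids the real/complex case split and is shorter; the paper's $AE-2F$ trick is clever but ad~hoc by comparison.
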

In the case of the two last equilibria, we were not able to prove theoretically their stability. Simulations of the solution seem to show that the stability depends on the values of the parameters.

\begin{proof}
\textbf{Equilibrium $(0,0,\zeta,0,0,\zeta)$}: The fixed point $(0,0,\zeta,0,0,\zeta)$ is the easiest to study, as all the eigenvalues of the Jacobian Matrix at this point are negative:
$$  (-b\beta,b-b\beta,-b\beta+d,-b\beta+d, -b\beta-2p,b-b\beta-2p ) . $$
Hence, the equilibrium $(0,0,{\zeta},0,0,{\zeta})$ is stable.\\
\\
\textbf{Equilibrium $(\zeta,0,0,\zeta,0,0)$}: The second fixed point, $(\zeta,0,0,\zeta,0,0)$, is more involved to study, since the Jacobian Matrix at this points admits four negative eigenvalues, and $0$ as an eigenvalue with 
multiplicity two:
$$  (0,0,-b\beta,-b\beta+d, -b\beta+d,-b\beta-2p ) . $$
As a consequence, we have to go further to get the stability of this fixed point.
The Jacobian Matrix at this fixed point is
\begin{equation}
\label{mat_jacoZ00Z00}
\footnotesize{\left(
\begin{array}{cccccc}
 -b \beta+d & -b \beta+d & -2 b \beta+d-p & 0 & 0 & p \\
 0 & 0 & 2 b & 0 & 0 & 0 \\
 0 & 0 & -b \beta-p & 0 & 0 & p \\
 0 & 0 & p & -b \beta+d & -b \beta+d & -2 b \beta+d-p \\
 0 & 0 & 0 & 0 & 0 & 2 b \\
 0 & 0 & p & 0 & 0 & -b \beta-p \\
\end{array}
\right)},
\end{equation}
and its eigensystem is
$$\left( 0; (0, 0, 0, -1, 1, 0)\right)$$
$$\left( 0; (-1, 1, 0, 0, 0, 0)\right)$$
\begingroup\makeatletter\def\f@size{9}\check@mathfonts
$$\left( -b\beta ; \left(\frac{
 2 b \beta ( \beta-1) + 2 d - \beta d}{\beta d}, -\frac{2}{\beta}, 1, \frac{
 2 b \beta ( \beta-1) + 2 d - \beta d}{\beta d}, -\frac{2}{\beta}, 1\right) \right)$$
 \endgroup
 $$\left(-b\beta+d;( 0, 0, 0, 1, 0, 0)\right)$$
 $$\left(-b\beta+d;(1, 0, 0, 0, 0, 0)\right)$$
\begingroup\makeatletter\def\f@size{8}\check@mathfonts
 \begin{multline*} \Big( -b\beta-2p ;\left(-\frac{(2 b^2 \beta (\beta-1) +  b d(2 - \beta) + 6 b \beta p - 2 d p + 
   4 p^2)}{ (d + 2 p)(b \beta + 2 p)},  \frac{2 b}{
 b \beta + 2 p}, \right. \\ \left.  -1, \frac{2 b \beta - d + 2 p}{d + 2 p} + 
   \frac{2 b(d-b \beta) }{(b \beta + 2 p) (d + 2 p)}, -\frac{2 b}{
  b \beta + 2 p}, 1\right)\Big).\end{multline*}\endgroup
Hence if we  make a translation such that the fixed point has coordinates $(0,0,0,0,0,0)$, and introduce a new basis via the following matrix:
\begin{equation} \label{change_basis} \footnotesize{\left(
\begin{array}{cccccc}
 0 & -1 & -\frac{-2 b \beta^2+2 b \beta+d \beta-2 d}{\beta d} & 0 & 1 & -\frac{2 \beta^2 b^2-2 \beta b^2-\beta d b+2 d b+6 \beta p b+4 p^2-2 d p}{(b \beta+2 p) (d+2 p)} \\
 0 & 1 & -\frac{2}{\beta} & 0 & 0 & \frac{2 b}{b \beta+2 p} \\
 0 & 0 & 1 & 0 & 0 & -1 \\
 -1 & 0 & -\frac{-2 b \beta^2+2 b \beta+d \beta-2 d}{\beta d} & 1 & 0 & \frac{2 b (d-b \beta)}{(b \beta+2 p) (d+2 p)}-\frac{-2 b \beta+d-2 p}{d+2 p} \\
 1 & 0 & -\frac{2}{\beta} & 0 & 0 & -\frac{2 b}{b \beta+2 p} \\
 0 & 0 & 1 & 0 & 0 & 1 \\
\end{array}
\right)} ,\end{equation}
we get that the dynamical system \eqref{systcompletdom} can be written
\begin{equation}\label{center_manifold}
 \begin{aligned}
  \dot X&=G_1(X,Y),\\
  \dot Y&= DY+G_2(X,Y).
 \end{aligned}
\end{equation}
Here for every $r\in \N$, $X\in C^r(\R,\R^2)$, $Y\in C^r(\R,\R^4)$, 
$G_1\in C^r(\R^6,\R^2)$, $G_1(0,0)=0$, $DG_1(0,0)=0$, $G_2\in C^{r}(\R^6,\R^4)$, 
$G_2(0,0)=0$, $DG_2(0,0)=0$, and $D$ is a diagonal matrix whose main diagonal is : $(-b\beta,-b\beta+d, -b\beta+d,-b\beta-2p)$.\\
According to Theorem 2.12.1 in \cite{perko2013differential}, there exists a real function $h$ defined in a neighbourhood of $(0,0)$ and with values in $\R^4$ 
such that $h(0)=0$ and $Dh(0)=0$, which defines the center manifold. That is to say: close to the null fixed point, $Y=h(X)$. The function $h$ 
satisfies
\begin{equation}
 \label{eq_h}
 Dh(X) \cdot G_1(X,h(X))=D\cdot h(X)+G_2(X,h(X)).
\end{equation}
We only need the first non null order of the function $h$ in a neighbourhood of $(0,0)$. Hence we are looking for a function which satisfies
\begin{equation}\label{eq_h1}
 h: \begin{pmatrix} x_1 \\ x_2 \end{pmatrix} \to  \begin{pmatrix} h_{11}\\h_{12}\\h_{13}\\h_{14} \end{pmatrix} x_1^2 + 
 \begin{pmatrix} h_{21}\\h_{22}\\h_{23}\\h_{24} \end{pmatrix} x_1 x_2 +\begin{pmatrix} h_{31}\\h_{32}\\h_{33}\\h_{34} \end{pmatrix} x_2^2 + 
 O\left(\left\Vert \begin{pmatrix} x_1\\x_2\end{pmatrix} \right\Vert^3 \right),
\end{equation}
where the parameters $(h_{ij}, i \in \{1,2,3\}, j \in \{1,2,3,4\})$ have to be computed.
To compute these latter, we inject the expression of $h$ \eqref{eq_h1} into \eqref{eq_h} and compute the second order of the Taylor series which is the 
first non null order of the series. First we notice that the second order on the left hand side of \eqref{eq_h} is null. As a consequence, we only have to 
develop the right hand side and identify the coefficients $h_{ij}$. We get
\begingroup\makeatletter\def\f@size{7}\check@mathfonts
\begin{equation*}
 h\begin{pmatrix} x_1 \\ x_2 \end{pmatrix} = 
 \begin{pmatrix}
 \frac{c }{8 (b \beta-d)} (x_1^2+x_2^2) \\ 
 \frac{b \beta c}{2d (-b \beta+d)^2 (d+2 p)} (-(d p+b (-1+\beta) (d+p)) x_1^2 +(b-b \beta+d) p x_2^2) \\ 
 \frac{b \beta c }{2 d (-b \beta+d)^2 (d+2 p)} ((b-b \beta+d) p x_1^2-(d p+b (-1+\beta) (d+p)) x_2^2)\\
 \frac{b \beta c }{8 (b \beta-d) (b \beta+2 p)} (x_1^2-x_2^2)
 \end{pmatrix} + O\left(\left\Vert \begin{pmatrix} x_1\\x_2\end{pmatrix} \right\Vert^3 \right).
\end{equation*}
\endgroup
From the first equation in \eqref{center_manifold} we know that in a neighbourhood of $(0,0)$, $\dot X=G_1(X,h(X))$. Hence
\begin{equation*}
 \left\{
 \begin{aligned}
 \dot x_1 &= \frac{b c }{2 (b \beta-d) (b \beta+2 p)}[-\beta (b(\beta-1)+2 p) x_1^2 +p (x_1^2+x_2^2)]\\
 &\qquad \qquad \qquad \qquad \qquad \qquad \qquad + O\left(\left\Vert ( x_1,x_2) \right\Vert^3 \right),\\
 \dot x_2 &= \frac{b c }{2 (b \beta - d) (b \beta + 2 p)}[p (x_1^2+x_2^2) -\beta (b(\beta-1)+2 p) x_2^2]
 \\ & \qquad \qquad \qquad \qquad \qquad \qquad \qquad  + O\left(\left\Vert ( x_1,x_2) \right\Vert^3 \right).
 \end{aligned}
\right.
\end{equation*}
But we notice that the solutions $(\tilde x_1,\tilde x_2)$ of
\begin{equation}
 \left\{
 \begin{aligned}
 \dot{\tilde{x}}_1 &= \frac{b c }{2 (b \beta-d) (b \beta+2 p)}[-\beta (b(\beta-1)+2 p) \tilde x_1^2 +p ( \tilde x_1^2+\tilde x_2^2)],\\
 \dot{\tilde{x}}_2 &= \frac{b c }{2 (b \beta - d) (b \beta + 2 p)}[p (\tilde x_1^2+\tilde x_2^2) -\beta (b(\beta-1)+2 p) \tilde x_2^2].
 \end{aligned}
\right.
\end{equation}
satifiy
$$
\dot{\tilde{x}}_1 + \dot{\tilde{x}}_2 = -\frac{b^2\beta(\beta-1) c }{2 (b \beta-d) (b \beta+2 p)}(\tilde x_1^2+\tilde x_2^2) .
$$
Moreover, using that $z_1^2+z_2^2 \leq (z_1+z_2)^2 \leq 2 (z_1^2+z_2^2)$ for any $(z_1,z_2) \in \R_+^2$, we get
\begin{multline*}
-\frac{b^2\beta(\beta-1) c }{2 (b \beta-d) (b \beta+2 p)}(\tilde x_1+\tilde x_2)^2 \\ \leq\dot{\tilde{x}}_1 + 
\dot{\tilde{x}}_2 \leq \\
-\frac{b^2\beta(\beta-1) c }{4 (b \beta-d) (b \beta+2 p)}(\tilde x_1+\tilde x_2)^2.
\end{multline*}
Hence $x_1+x_2$ has a constant sign and converges to $0$ with a rate 
$t\mapsto \frac{1}{t}$ in a neighbourhood of the fixed point $(0,0,0,0,0,0)$.
But by inversion of the matrix \eqref{change_basis}, we see that
$$x_1+x_2=\frac{2}{\beta}(z_{aa,1}+z_{aa,2})+z_{Aa,1}+z_{Aa,2}$$
This ends the proof of the stability of the fixed point $(\zeta,0,0,\zeta,0,0)$. \\
Notice that this method has been used recently in a similar system in \cite{bovier2015survival}. In this paper the authors were interested in a system of diploid individuals, 
with dominance 
but without sexual preference and migration. Their center manifold was of dimension one.\\
\\
\textbf{Equilibria $(\zeta,0,0,0,0,\zeta)$ and $(0,0,\zeta,\zeta,0,0)$}: Those two equilibria with different monomorphic populations in the two patches 
are symmetrical. We only prove the results on $(0,0,\zeta,\zeta,0,0)$. 
The eigenvalues of the Jacobian Matrix at this fixed point are
$$ (0, -b\beta+d, -b\beta+d), $$
as well as the roots of the polynomial:
\begin{multline*}
P(X)= X^3  + (b (3 \beta-1) + 3 p) X^2 
+b^3 \beta^2 ( \beta-1) + b^2 \beta p (3 \beta-2) \\+ 2 b p^2( \beta -1)  
+ ( b^2 \beta (3\beta-2) + 2 b p (3 \beta -1) + 2 p^2) X  \\
=: X^3 + A X^2 + EX+F.
\end{multline*}
We notice that the coefficients of $P$ are positive. Hence any real root is negative. This proves that all the roots are negative in the case where 
the three roots are real numbers. The second possibility is that $P$ admits one negative real root (denoted by $-\rho$) and two complex conjugate roots 
(denoted by $ \phi e^{i\theta} $ and $\phi e^{-i\theta}$, $\phi \in \R_+$, $\theta \in [0,2\pi)$).
We aim at showing that $\cos \theta < 0$, which will ensure that all the roots of $P$ have negative real parts.\\
With these notations, $P$ can be rewritten as follows
$$ P(X)=X^3+(\rho - 2\phi \cos \theta)X^2+\phi(\phi-2\rho \cos \theta)X+\rho \phi^2. $$
As the coefficients of $P$ are positive, we get
$$ \rho > 2\phi \cos \theta \text{ and } \phi > 2 \rho \cos \theta, $$
which yields
$$ 2 \cos \theta < \frac{\rho}{\phi}< \frac{1}{2\cos \theta},$$
and thus
$$ 4 \cos^2 \theta <1. $$
But we also have the following series of inequalities:
\begin{multline*}
0<b^3 \beta (2 + 7 (\beta-1) \beta) + 
 b^2 (2 + 7 \beta ( 3 \beta-2)) p + 4 b ( 5 \beta-1) p^2 + 6 p^3\\
= AE-2F = (\rho - 2\phi \cos \theta)\phi(\phi-2\rho \cos \theta)- 2\rho \phi^2\\ 
= \phi^2 \rho  (4\cos^2 \theta -1) -2 \phi \cos \theta (\phi^2+ \rho^2) \leq -2 \phi \cos \theta (\phi^2+ \rho^2).
\end{multline*}
This implies that $\cos \theta <0$, as expected.
Unfortunately, as we do not know explicitely the roots of $P$, we cannot apply Theorem 2.12.1 in \cite{perko2013differential} to get the behaviour of 
the solutions of \eqref{sys1patch0dom} in the center manifold and deduce the stability of the fixed point $(0,0,\zeta,\zeta,0,0)$.
\end{proof}

\section{Some properties of the system with migration}

\subsection{Polymorphic equilibria}

We were not able to determine all the polymorphic equilibria of the three dynamical systems governing the population dynamics in the two 
codominant and the dominant cases. Of course, thanks to our analysis of the cases without migration, we are able to give one unstable polymorphic 
equilibrium for each dynamical system:
\begin{itemize}
 \item $(\delta \xi, \xi, \delta \xi, \delta \xi, \xi, \delta \xi)$ for the first codominant model
 \item $(\delta' \xi', \xi', \delta' \xi', \delta' \xi', \xi', \delta' \xi')$ for the second codominant model
 \item $ \left(b \frac{\beta+1}{4c}-\frac{d}{2c}\right) \left( \frac{\sqrt{\beta+1}-1}{\sqrt{\beta+1}+1}, \frac{2}{\sqrt{\beta+1}+1},
1,\frac{\sqrt{\beta+1}-1}{\sqrt{\beta+1}+1}, \frac{2}{\sqrt{\beta+1}+1},
1\right) $ for the model with dominance
\end{itemize}

To give an idea of the complexity of the dynamical systems under study, for the parameters $b=2,d=1,c=0.5,\beta=1.1$ and $p=5$, Mathematica 
gives the numerical approximation of $9$ polymorphic equilibria for the second codominant model.

\subsection{A majority allele} \label{app_maj_allele}

Let us denote by $\bar{\alpha}$ the complement of $\alpha \in \{A,a\}$.
In both codominant models, if $z_{\alpha \alpha,i}(0)\geq z_{\bar{\alpha}\bar{\alpha},i}(0)$ for $i=1,2$, 
then $z_{\alpha \alpha,i}(t)\geq z_{\bar{\alpha}\bar{\alpha},i}(t)$ for $i=1,2$ and
for every positive $t$.
Indeed if the $(z_{\alpha \alpha',i}, \alpha \alpha' \in \{AA,Aa,aa\}, i \in \{1,2\})$ evolve according to the dynamical system \eqref{systcompletcodom}, 
we have
\begin{multline*}
 \dot{z}_{AA,1}-\dot{z}_{aa,1} = (z_{AA,1}-z_{aa,1}) \left( \beta b -d - c N_1- \frac{b(\beta-1)+p}{2}\frac{z_{Aa,1}}{N_1} \right)
\\ + \frac{p}{2}\frac{z_{Aa,2}}{N_2}(z_{AA,2}-z_{aa,2}),
\end{multline*}
and if the 
evolve according to the dynamical system \eqref{systcompletcodom2}, 
we have
\begin{multline*}
 \dot{z}_{AA,1}-\dot{z}_{aa,1} = (z_{AA,1}-z_{aa,1}) \left( \beta b -d - c N_1- (b(\beta-1)+p)\frac{z_{Aa,1}}{N_1} \right)
\\ + p\frac{z_{Aa,2}}{N_2}(z_{AA,2}-z_{aa,2}).
\end{multline*}
From these two equalities we get that the sets
$$ \mathcal{B}^=:= \{ z_{AA,1}=z_{aa,1}\} \cap \{ z_{AA,2}= z_{aa,2} \} $$
and
$$\mathcal{B}^>:= \{ z_{AA,1}>z_{aa,1}\} \cap \{ z_{AA,2}> z_{aa,2} \} $$
are invariant under the dynamical systems \eqref{systcompletcodom} and \eqref{systcompletcodom2}.

\section{The case of constant migration} \label{studyconstantpim}

For the sake of completeness, we study the case where all individuals have the same migration rate $p$, independent of the population state.
The population dynamics is thus solution to the following system of equations:
\begin{equation}
\label{eq_model_mig_constante}
\left\{
 \begin{aligned}
 \dot z_{AA,i}=&\frac{b }{N_i}\left( \beta z_{AA,i}^2+p_\beta(AA,Aa)z_{AA,i}z_{Aa,i}+\frac{p_\beta(Aa,Aa)}{4} z_{Aa,i}^2 \right) \\&-(d+cN_i)z_{AA,i}
  +p   \left(z_{AA,j}-z_{AA,i}\right)\\
 \dot z_{Aa,i}=&\frac{b}{N_i}\left( \frac{p_\beta(Aa,Aa)}{2} z_{Aa,i}^2+p_\beta(AA,Aa)z_{Aa,i}z_{AA,i}\right.\\&+p_\beta(aa,Aa)z_{Aa,i}z_{aa,i} 
  +2 z_{AA,i}z_{aa,i} \Big)-(d+cN_i)z_{Aa,i} \\
&+p \left(z_{Aa,j}-z_{Aa,i}\right)\\
 \dot z_{aa,i}=&\frac{b }{N_i}\left( \beta z_{aa,i}^2+p_\beta(aa,Aa)z_{aa,i}z_{Aa,i} +\frac{p_\beta(Aa,Aa)}{4} z_{Aa,i}^2 \right) \\
 &-(d+cN_i)z_{aa,i}+p \left(z_{aa,j}-z_{aa,i}\right).
\end{aligned}
\right. 
\end{equation}
We will show that in this case the equilibria are either null, or monomorphic, or have all their coordinates positive. More precisely
\begin{lem}\label{mono}
Let $(z_{\alpha\alpha',i}, (\alpha, \alpha') \in \{A,a\}^2, i \in \{1,2\} )$ be an equilibrium of Equation \eqref{eq_model_mig_constante}.
Then, either $z=0$, or all the coordinates of $z$ are positive, or $z$ has only two nonnull coordinates:
$$ z_{\alpha\alpha,1}=z_{\alpha\alpha,2}=\frac{b\beta-d}{c}, $$
for an $\alpha \in \{A,a\}$.
\end{lem}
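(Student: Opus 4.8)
The plan is to read off the classification directly from the equilibrium equations of \eqref{eq_model_mig_constante}, exploiting a sign argument: at any equilibrium each right-hand side vanishes, and in every equation the reproduction term $\tfrac{b}{N_i}(\cdots)$ and the immigration term $p\,z_{\cdot,j}$ are nonnegative. Whenever a coordinate $z_{\cdot,i}$ itself vanishes, its death/competition term $-(d+cN_i)z_{\cdot,i}$ and its emigration term $-p\,z_{\cdot,i}$ also vanish, so the surviving terms, all nonnegative, must vanish individually. This forces \emph{neighbouring} coordinates to be zero, and the whole argument reduces to propagating zeros and then solving a small residual system. Throughout I take $p>0$ (the genuine migration case); for $p=0$ the patches decouple and the states monomorphic-in-each-patch with different alleles are equilibria, so the statement as written does require $p>0$.

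First I would dispose of the degenerate case $N_i=0$ for some $i$: then all three coordinates of patch $i$ vanish, the only surviving contributions to $\dot z_{\cdot,i}$ are the immigration terms $p\,z_{\cdot,j}$, and these must be zero, giving $z=0$. So assume $N_1,N_2>0$ and that some coordinate vanishes. If a homozygote coordinate vanishes, say $z_{AA,i}=0$, its equilibrium equation collapses to $\tfrac{b}{N_i}\tfrac{p_\beta(Aa,Aa)}{4}z_{Aa,i}^2+p\,z_{AA,j}=0$; since $p_\beta(Aa,Aa)\ge 1$ and $p>0$, both summands vanish, whence $z_{Aa,i}=0$ and $z_{AA,j}=0$. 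Feeding $z_{AA,j}=0$ back into patch $j$ gives $z_{Aa,j}=0$ as well, so every $A$-bearing type is absent in both patches and the configuration is purely $aa$. If instead the vanishing coordinate is a heterozygote, $z_{Aa,i}=0$, its equation collapses to $\tfrac{2b}{N_i}z_{AA,i}z_{aa,i}+p\,z_{Aa,j}=0$, which forces $z_{AA,i}z_{aa,i}=0$; thus a homozygote in patch $i$ also vanishes and we are back to the previous case. By the $A\leftrightarrow a$ symmetry, a vanishing $aa$-coordinate leads symmetrically to a purely-$AA$ configuration. This step uses only $p_\beta(\alpha\alpha,\alpha\alpha)=\beta$ and $p_\beta(Aa,Aa)\ge 1$, values shared by (COD1), (COD2) and (DOM), so the argument is uniform across the three models.

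It then remains to solve the reduced two-patch, single-homozygote system. In the purely-$aa$ configuration, writing $x_i=z_{aa,i}>0$ and $N_i=x_i$, each equation becomes $x_i(b\beta-d-cx_i)+p(x_j-x_i)=0$. Subtracting the two equations factors as $(x_1-x_2)\big[(b\beta-d)-c(x_1+x_2)-2p\big]=0$. The branch $x_1=x_2$ immediately yields $x_1=x_2=\zeta$ via $\zeta=(b\beta-d)/c$. The main obstacle is to exclude the asymmetric branch $x_1+x_2=(b\beta-d-2p)/c$: here I would add the two equations to obtain $(b\beta-d)(x_1+x_2)=c(x_1^2+x_2^2)$, and, setting $s=x_1+x_2$ and using $cs=b\beta-d-2p$, deduce $x_1x_2=-ps/c<0$, which is incompatible with $x_1,x_2>0$. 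This is precisely where both positivity and $p>0$ are essential. The purely-$AA$ case is identical by symmetry, giving $z_{\alpha\alpha,1}=z_{\alpha\alpha,2}=\zeta$ and completing the classification.
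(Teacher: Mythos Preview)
Your proof is correct and follows essentially the same route as the paper: propagate zeros through the equilibrium equations using nonnegativity of the birth and immigration terms (first for a vanishing homozygote coordinate, then reducing a vanishing heterozygote to that case), and finally analyze the residual single-type two-patch system. The only noticeable difference is in excluding the asymmetric branch $x_1\neq x_2$ of the residual system: the paper argues by signs of the quadratic $R[X]=cX^2-(b\beta-d)X$ (showing both $x_i$ would have to lie on the same side of $\zeta$, contradicting $R[x_1]+R[x_2]=0$), whereas you subtract and add the two equations to deduce $x_1x_2=-ps/c<0$. Both are elementary; your computation is slightly more direct and has the virtue of making the requirement $p>0$ explicit, which the paper leaves implicit.
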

Let us denote by $\bar{\alpha}$ the complementary of $\alpha$ in $\{A,a\}$.
The first step to get Lemma \ref{mono} is to prove the following Lemma.
\begin{lem} \label{lemmono}
Let $(z_{\alpha\alpha',i}, (\alpha, \alpha') \in \{A,a\}^2, i \in \{1,2\} )$ be an equilibrium of Equation \eqref{eq_model_mig_constante}.
Assume that there exists $(\alpha,i) \in \{A,a\}\times\{1,2\}$ such that $z_{\alpha\alpha,i}=0$.
Then $z=0$ or 
$$ z_{\bar{\alpha}\bar{\alpha},1}=z_{\bar{\alpha}\bar{\alpha},2}= \frac{b\beta-d}{c}, $$
and the other coordinates of $z$ are null.
\end{lem}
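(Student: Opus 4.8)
The plan is to exploit non-negativity of the coordinates together with the sign structure of the equilibrium equations \eqref{eq_model_mig_constante}, and only at the very end to analyse a reduced two-dimensional system. By the symmetry of \eqref{eq_model_mig_constante} under the exchange of the two patches, I may assume $i=1$, so that $z_{\alpha\alpha,1}=0$ for some fixed $\alpha\in\{A,a\}$; I keep $\alpha$ generic, since the $\alpha\alpha$-equation has the same shape for $\alpha=A$ and $\alpha=a$ (this avoids invoking any $A\leftrightarrow a$ symmetry, which fails in the dominant model).

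First I would write the equilibrium equation for the genotype $\alpha\alpha$ in patch $1$. Setting $z_{\alpha\alpha,1}=0$ kills the death term and the emigration term $-p\,z_{\alpha\alpha,1}$, and among the birth terms only the contribution of $Aa\times Aa$ matings survives, so the equation collapses to
$$0=\frac{b}{N_1}\,\frac{p_\beta(Aa,Aa)}{4}\,z_{Aa,1}^2+p\,z_{\alpha\alpha,2}.$$
Both summands are non-negative (all coordinates are $\geq 0$, and $p_\beta(Aa,Aa)>0$ in each of the three models), hence each vanishes: I get $z_{Aa,1}=0$ and $z_{\alpha\alpha,2}=0$. The degenerate case $N_1=0$ is harmless, since then $z_{Aa,1}=0$ automatically and the equation reduces to $0=p\,z_{\alpha\alpha,2}$. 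Feeding $z_{\alpha\alpha,2}=0$ into the $\alpha\alpha$-equation of patch $2$ and repeating the same argument yields $z_{Aa,2}=0$. At this stage all four coordinates $z_{\alpha\alpha,1},z_{\alpha\alpha,2},z_{Aa,1},z_{Aa,2}$ are zero, so only $x:=z_{\bar\alpha\bar\alpha,1}$ and $y:=z_{\bar\alpha\bar\alpha,2}$ can be positive.

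It then remains to solve the reduced system obtained by substituting these zeros. Since now $N_i=z_{\bar\alpha\bar\alpha,i}$ and every mixed birth term drops out, the two surviving equations are
$$(b\beta-d-cx)\,x+p\,(y-x)=0,\qquad (b\beta-d-cy)\,y+p\,(x-y)=0.$$
Subtracting them factors as $(x-y)\big[(b\beta-d)-c(x+y)-2p\big]=0$, while adding them gives $(b\beta-d)(x+y)=c(x^2+y^2)$. If $x=y$, the sum relation forces $x\in\{0,(b\beta-d)/c\}$, i.e. $z=0$ or $x=y=\zeta$, as claimed. Otherwise the bracket vanishes, $c(x+y)=b\beta-d-2p$, and combining with the sum relation I can compute $xy=-p(b\beta-d-2p)/c^2$. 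Here lies the only real subtlety, where I use non-negativity decisively: writing $s:=b\beta-d-2p$, we have $x+y=s/c$ and $xy=-ps/c^2$, and since $x,y\ge 0$ we need $s\ge 0$ and $-ps/c^2\ge 0$, forcing $s=0$ and hence $x=y=0$. Thus this branch reproduces only the trivial equilibrium.

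The argument is essentially elementary; the point to get right is the sign bookkeeping in the last step, where the spurious branch $c(x+y)=b\beta-d-2p$ must be eliminated purely from $x,y\ge 0$, together with the repeated use of the observation that a sum of non-negative quantities vanishes only if each term does, which is what propagates the vanishing of the $z_{\alpha\alpha,\cdot}$ to the $z_{Aa,\cdot}$ across both patches.
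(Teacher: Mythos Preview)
Your proof is correct and structurally identical to the paper's: both first propagate the vanishing of $z_{\alpha\alpha,1}$ to $z_{Aa,1}$, $z_{\alpha\alpha,2}$ and then $z_{Aa,2}$ via the non-negativity of the surviving terms in the $\alpha\alpha$-equations, and then analyse the reduced two-variable system for $x=z_{\bar\alpha\bar\alpha,1}$, $y=z_{\bar\alpha\bar\alpha,2}$.

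The only difference lies in how the last step is finished. The paper assumes $x\neq y$, say $y>x$, and reads directly off the first equation that $b\beta-d-cx<0$, hence $y>x>(b\beta-d)/c$; adding the two equations then gives $R[x]+R[y]=0$ with $R[X]=cX^2-(b\beta-d)X$, which forces one of $x,y$ into $(0,(b\beta-d)/c)$, a contradiction. You instead subtract to factor $(x-y)[(b\beta-d)-c(x+y)-2p]=0$, and on the branch where the bracket vanishes you compute $x+y$ and $xy$ explicitly and use $x,y\ge 0$ to force $s:=b\beta-d-2p=0$. Both arguments are elementary and of comparable length; yours has the mild advantage of being fully symmetric in $x,y$ and of making the role of $p>0$ explicit, while the paper's sign argument avoids the algebra of computing $xy$. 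Your remark that the $\alpha\alpha$-equation has the same shape for $\alpha=A$ and $\alpha=a$ (so no $A\leftrightarrow a$ symmetry is needed, which indeed fails under (DOM)) is a nice clarification the paper glosses over.
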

\begin{proof}
Let us first notice that $0$ is an equilibrium and that if the population size of one patch is null, the population size of the other patch is also null.
Thus we may assume in the rest of the proof without loss of generality that $z_{AA,1}=0$, that $N_1>0$ and $N_2>0$. Then the first equation in \eqref{eq_model_mig_constante} writes:
 $$ 0=\frac{b}{N_1}\frac{p_\beta(Aa,Aa)}{4} z_{Aa,1}^2 +pz_{AA,2} ,$$
 and hence,
 $$ z_{Aa,1}=z_{AA,2}=0. $$
 As $z_{AA,2}=0$, using the same reasoning we get that $z_{Aa,2}=0$. Now we are left with the last equation in \eqref{eq_model_mig_constante}, which can be written:
 \begin{equation} \label{zaa1} 0=( b\beta  -(d+cz_{aa,1}))z_{aa,1} +p (z_{aa,2}-z_{aa,1}). \end{equation}
 Similarly we get
 \begin{equation} \label{zaa2} 0=( b\beta  -(d+cz_{aa,2}))z_{aa,2} +p (z_{aa,1}-z_{aa,2}). \end{equation}
First, notice that a possible solution is $z_{aa,1}=z_{aa_2}=(b\beta-d)/c$. Assume that $z_{aa,1}\neq z_{aa,2}$. Then without loss of generality, 
we may assume that $z_{aa,2}>z_{aa,1}$. 
In this case, \eqref{zaa1} entails:
$$ b\beta  -(d+cz_{aa,1})<0 \Longleftrightarrow z_{aa,1}> \frac{b\beta -d}{c}, $$
hence
\begin{equation} \label{caszaa2plusgrand} z_{aa,2}>z_{aa,1}> \frac{b\beta -d}{c}. \end{equation}
Now let us add Equations \eqref{zaa1} and \eqref{zaa2}. We obtain
$$ ( b\beta  -(d+cz_{aa,1}))z_{aa,1} +( b\beta  -(d+cz_{aa,2}))z_{aa,2}=0 ,$$
or in others words
\begin{equation}\label{Ropposes} R[z_{aa,1}]+ R[z_{aa,2}]=0, \end{equation}
where $R$ is the polynomial function which at $X$ associates 
$$ R[X]=cX^2-(b\beta-d)X. $$
This polynomial function is negative on $I_1:=(0,(b\beta-d)/c)$ and positive on $I_2:=((b\beta-d)/c,\infty)$.
Thus \eqref{Ropposes} implies that one of the $z_{aa,i}$s' belongs to $I_1$, and the other one to $I_2$. This contradicts \eqref{caszaa2plusgrand}.
\end{proof}
The second and last step to get Lemma \ref{mono} is to notice that if $z$ is an equilibrium of \eqref{eq_model_mig_constante} and if $z_{Aa,1}=0$, 
then the second equation in \eqref{eq_model_mig_constante} writes:
$$ \frac{2b}{N_1} z_{AA,1}z_{aa,1}+p z_{Aa,2}=0. $$
This implies that $z_{Aa,2}=0$, and $z_{AA,1}=0$ or $z_{aa,1}=0$. Then we may apply Lemma \ref{lemmono}.
This ends the proof of Lemma \ref{mono}.
 
%% \section{}
%% \label{}

%% If you have bibdatabase file and want bibtex to generate the
%% bibitems, please use
%%
   \bibliographystyle{apalike} 
  \bibliography{biblio_speciation}

%% else use the following coding to input the bibitems directly in the
%% TeX file.

% \begin{thebibliography}{00}

%% \bibitem[Author(year)]{label}
%% Text of bibliographic item

% \bibitem[ ()]{}
% 
% \end{thebibliography}
\end{document}